\newcommand{\PreserveBackslash}[1]{\let\temp=\\#1\let\\=\temp}
\newcolumntype{C}[1]{>{\PreserveBackslash\centering}p{#1}}
\newcolumntype{R}[1]{>{\PreserveBackslash\raggedleft}p{#1}}
\newcolumntype{L}[1]{>{\PreserveBackslash\raggedright}p{#1}}
\newtheorem{definition}{\textbf{Definition}}
\newtheorem{theorem}{\textbf{Theorem}}
\newtheorem{lemma}{\textbf{Lemma}}
\newcommand{\remove}[1]{}
\newenvironment{proof}{\noindent {\bf
		Proof.}}{\rule{3mm}{3mm}\par\medskip}
\pgfplotsset{compat=1.3}
  \definecolor{nvb}{RGB}{65,105,225}
\newcommand*\circled[1]{\tikz[baseline=(char.base)]{
            \node[shape=circle,draw,inner sep=1pt] (char) {#1};}}
\newcolumntype{P}[1]{>{\centering\arraybackslash}p{#1}}
\newcolumntype{M}[1]{>{\centering\arraybackslash}m{#1}}
\begin{document}
	
\title{On Virtual Network Embedding: Paths and Cycles}	

 \author{Haitao~Wu,~Fen~Zhou,~\IEEEmembership{Senior~Member,~IEEE},~Yaojun~Chen,~Ran~Zhang

 \thanks{H. Wu is with the Department of mathematics, Nanjing University, China. He is also with the CERI-LIA at the University of Avignon. (email: whtmath2011@163.com).}
 \thanks{F. Zhou is with the LISITE lab of the Institut Superieur d'Electronique de Paris. He is also with the CERI-LIA at the University of Avignon. (emails: fen.zhou@(isep,  univ-avignon).fr).}
 \thanks{Y. Chen is with Department of Mathematics, Nanjing University, Nanjing 210093, China. (email: yaojunc@nju.edu.cn).}
  \thanks{R. Zhang is with School of Mathematics, Shanghai University of Finance and Economics, Guoding Road 777, 200433, Shanghai, P.R.China. (email: zhang.ran@mail.shufe.edu.cn).}}

\maketitle

\begin{abstract}
Network virtualization provides a promising solution to overcome the ossification of current networks, allowing multiple Virtual Network Requests (VNRs) embedded on a common infrastructure. The major challenge in network virtualization is the Virtual Network Embedding (VNE) problem, which is to embed VNRs onto a shared substrate network and known to be $\mathcal{NP}$-hard. The topological heterogeneity of VNRs is one important factor hampering the performance of the VNE. However, in many specialized applications and infrastructures, VNRs are of some common structural features $\textit{e.g.}$, paths and cycles. To achieve better outcomes, it is thus critical to design dedicated algorithms for these applications and infrastructures by taking into accounting topological characteristics. Besides, paths and cycles are two of the most fundamental topologies that all network structures consist of. Exploiting the characteristics of path and cycle embeddings is vital to tackle the general VNE problem. In this paper, we investigated the path and cycle embedding problems. For path embedding, we proved its $\mathcal{NP}$-hardness and inapproximability. Then, by utilizing Multiple Knapsack Problem (MKP) and Multi-Dimensional Knapsack Problem (MDKP), we proposed an efficient and effective MKP-MDKP-based algorithm. For cycle embedding, we proposed a Weighted Directed Auxiliary Graph (WDAG) to develop a polynomial-time algorithm to determine the least-resource-consuming embedding. Numerical results showed our customized algorithms can boost the acceptance ratio and revenue compared to generic embedding algorithms in the literature.
\end{abstract}
\begin{IEEEkeywords}
Virtual Network Embedding (VNE), Path and Cycles Embeddings, Algorithm Design.
\end{IEEEkeywords}

\section{Introduction}
\label{sec:intro}
\IEEEPARstart{N}{owadays}, the trend of Internet, especially driven by Big data applications \cite{s1:b1,s1:b2}, marches towards involving more network elements and end-users, larger volume
of traffic, and more diversified applications. However, current Internet infrastructures, consisting of a variety of technologies to run distributed protocols, become barrier to  improving Internet service. This diversification is often referred to as the Internet ossification problem \cite{s1:b3}. Network virtualization has been regarded as a compelling tool to overcome the Internet ossification and attracting a lot of researches \cite{s1:b4,s1:b5,s1:b6,s1:b7,s1:b8,s1:b9}. It supports various networks of diverse natures ($\textit{e.g.}$, network architectures, protocols, and user interactions \cite{s1:b4}) to coexist in a same substrate network and share substrate resources ($\textit{e.g.}$, CPUs and bandwidths). In the paradigm of network virtualization, the role of traditional Internet Service Providers (ISPs) is separated into two new entities: Infrastructure Provider (InP) and Service Provider (SP). The InP owns and manages the substrate network while the SP focuses on offering customized services to clients. In this business model as shown in Fig. \ref{fig:f1},
\begin{figure}[!htb]
\centering
\includegraphics[height=0.45\textwidth]{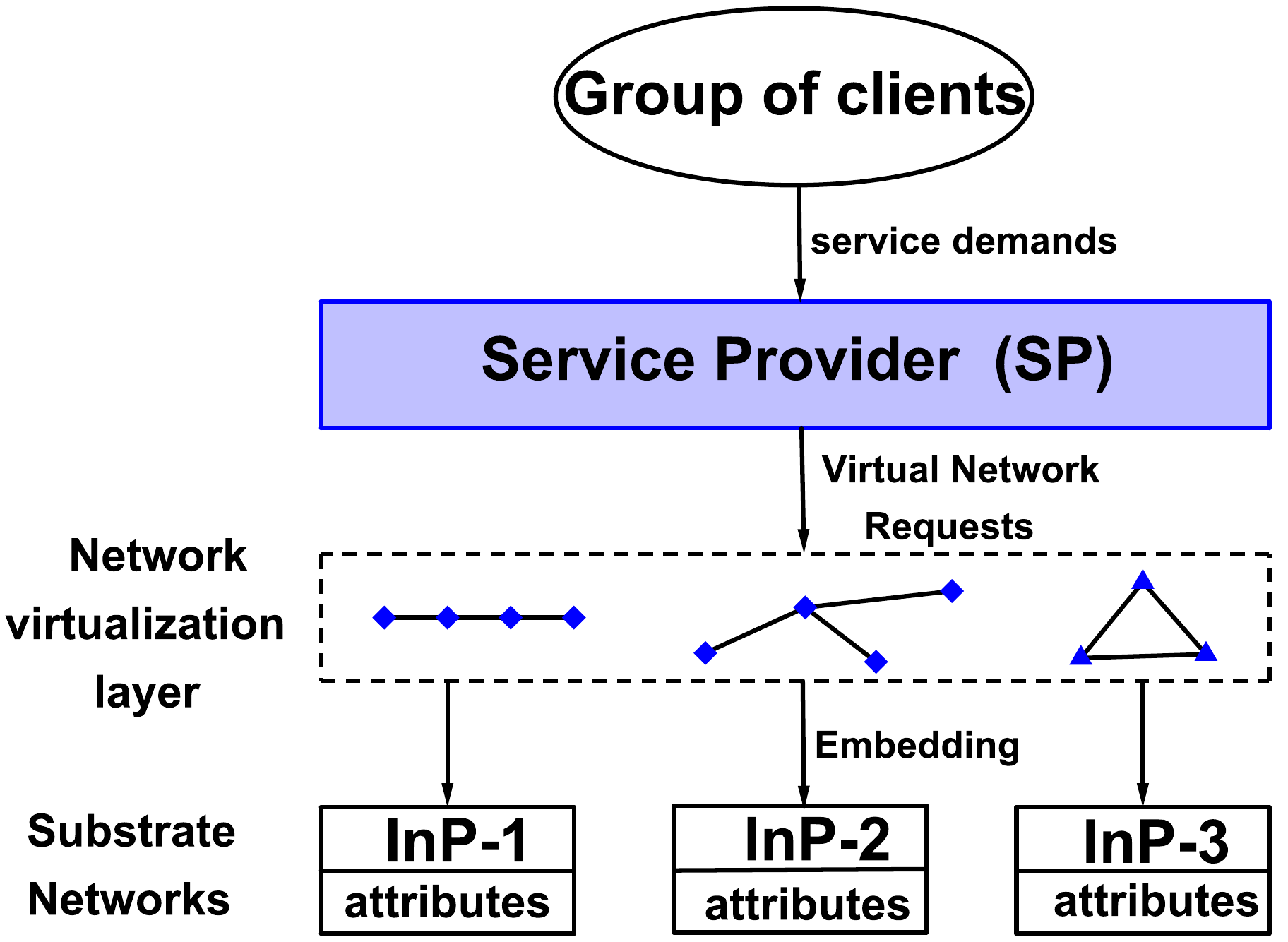}
\caption{The paradigm of network virtualization.}
\label{fig:f1}
\end{figure}
the InP sets up and maintains the physical equipments and substrate resources such as optical fibers, CPUs and bandwidths as well as network protocols. Herein, these physical equipments and resources compose the attributes of the InP, which serve to discover resources for SPs \cite{s1:b6}. The SP, pursuant to service demands of clients, creates Virtual Network Requests (VNR) (A VNR is a combination of Virtual Nodes (VNs) and Virtual Links (VLs) \cite{s1:b5}). It then discovers resources available in substrate networks by the attributes of InPs and selects appropriate ones for the deployment of VNRs \cite{s1:b6}.

How to effectively allocate resources of the substrate network to VNRs is a vital problem in network virtualization, which is often referred to as the Virtual Network Embedding (VNE) problem \cite{s1:b5}. Explicitly, the VNE needs to (a) find a Substrate Node (SN) to meet the computing requirement of each VN, and (b) find a substrate path to satisfy the bandwidth requirement of each VL in a VNR. The former is also called $\textit{Node Mapping}$ and the latter is named $\textit{Link Mapping}$. The VNE has been proven $\mathcal{NP}$-hard \cite{s1:b10} and studied intensively  \cite{s1:b4,s1:b7,s1:b8,s1:b9,s1:b11,s1:b12,s1:b13,s1:b14,S6:b2}. These works introduce different methods like heuristic algorithms and Integer Linear Programming (ILP) models, \textit{etc}, and cover many aspects, such as distributed computing of the VNE and embedding across multiple substrate networks.

One of the key impediments in the general VNE problem is the topological heterogeneity of both VNRs and substrate networks \cite{s1:b6}. However, this is not always true in many specific applications and substrate networks. For instance, the topologies of network service chains  are paths \cite{s1:b15}, and there are many substrate optical rings (\textit{i.e.}, cycles) \cite{s1:b16}. For these applications and infrastructures, specialized cloud service providers outperforming the general SPs are desired, where dedicated algorithms, taking into account the topological characteristics of the VNRs and substrate networks, can be afforded. Besides, paths and cycles are two of the most fundamental topologies in network structures. Exploiting the characteristics of path and cycle embeddings is vital to tackle general topology embedding. 

For example, if path and cycle embeddings can be effectively solved, we can decompose a general VNR into paths and cycles and then embed them on the specialized platforms, as shown in Fig. \ref{fig:f2}, to boost the performance of the VNE. (The feasibility of embedding a VNR across multiple substrate networks has been verified in \cite{s1:b9}, which makes the idea of decomposition practicable.)
%could be an innovative way to tackle the notoriously hard VNE problem. for path and cycle embeddings

 \begin{figure}[!htb]
 \centering
 \includegraphics[height=0.25\textwidth]{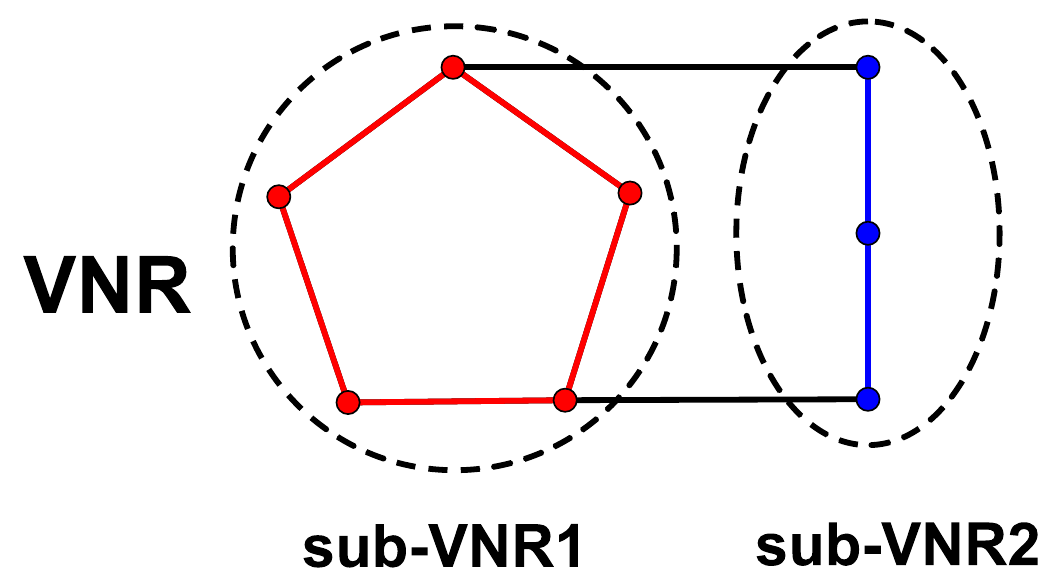}
 \caption{A general VNR decomposed into two sub-VNRs, one cycle and one path.}
 \label{fig:f2}
 \end{figure}

In the wake of the idea of the two special embeddings (paths and cycles), there are some important questions not answered yet: How hard are they? Still $\mathcal{NP}$-hard or there exist polynomial-time algorithms to solve them exactly or approximately? In this paper, we comprehensively investigate path and cycle embeddings from theoretical hardness analysis to practical algorithm design. The main contributions of this work are summarized as follows.
\begin{itemize}
\item From the theoretical perspective, we formally define the decision problem $Emb(G^s,G^r)$ of a single VNR $G^r$ embedded onto a substrate network $G^s$. For path embedding, we provided a polynomial-time algorithm to optimally solve the path-to-path embedding. For general topology, we proved $Emb(G^s, P^r)$ is $\mathcal{NP}$-hard and further analyzed its inapproximability. For cycle Embedding, we proved that $Emb(C^s, C^r)$ can be polynomial-timely solved while the problems of maximizing acceptance ratio and revenue are still strongly $\mathcal{NP}$-hard.  
\item From the perspective of practical algorithm design, customized   algorithms have been proposed by taking into account the nature of each embedding type. Following the idea of decomposing a substrate network into 'paths',  we transformed this problem into a Multiple Knapsack Problem (MKP) and Multi-Dimensional Knapsack Problem (MDKP), and developed an efficient and effective MKP-MDKP-based algorithm to solve it. For cycle embeding, we proposed a Weighted Directed Auxiliary Graph (WDAG) and succeeded to establish an one-to-one relation between each directed cycle in WDAG and each feasible embedding. Based on that, a polynomial-time algorithm is herein devised to achieve the least-resource-consuming embedding. To the best of our knowledge, it is the first time that the idea of embedding decomposition is proposed. 
\end{itemize}

%We proved that it is $\mathcal{NP}$-hard for cycle-to-cycle embedding as well. To this end,

The rest of this paper is organized as follows. Section \ref{sec:rw} briefly introduces the related work and our motivation. We present the network models and the formal description of the VNE problem in Section \ref{sec:nmandpd}. Then for the theoretical perspective of path embedding, we provide the proof of $\mathcal{NP}$-hardness and inapproximability in Section \ref{sec:pm}. For the practical algorithm, we present our design framework and devise the MKP-MDKP-based algorithms in Section \ref{sec:rm}. For cycle embedding, Section \ref{sec:ce} elaborates the construction of WDAG, characterizes the one-to-one relation between directed cycles and feasible embedding ways, and further devises the specialized cycle-embedding algorithm. We conduct simulations under different scenarios in Section \ref{sec:nr} to demonstrate the superiority of our proposed algorithms over the existing general algorithms. Finally, Section \ref{sec:conclusion} summarizes this paper.

\section{Related Work and Motivation}
\label{sec:rw}
The VNE, as the main challenging problem in network virtualization, drew a lot of attentions of researchers. In \cite{s1:b5}, the authors expanded the roles of the SP and InP in the paradigm of network virtualization and proposed a novel classification scheme for current VNE algorithms. Another comprehensive survey \cite{s1:b6} elaborated and emphasized the importance of resource discovery and allocation of the VNE. Many solutions to the VNE problem have been proposed in the literature \cite{s1:b4,s1:b7,s1:b8,s1:b9,s1:b11,s1:b12,s1:b13,s1:b14,S6:b2} including heuristic-based,  ILP, $\textit{etc}$. Later in \cite{s1:b4,s1:b7}, researchers found that the topology information of VNRs and substrate networks can be utilized to improve the performance of the VNE. The authors of \cite{s1:b7} applied a Markov random walk model, analogous to the idea of PageRank \cite{s2:b2}, to rank network nodes based on its resource and topological attributes. In \cite{s1:b4}, customized embedding algorithms for some special classes of topologies have been investigated and proven more effective than the general algorithms. Although VNRs may have arbitrary topologies, the network structures of some key applications and infrastructures are of common topologies $\textit{e.g.}$, paths and cycles \cite{s1:b15,s1:b16}. However, few relevant works intentionally pay attention to the two special but relatively common topologies in the VNE problem. Besides, paths and cycles are two of the most fundamental topologies in network structures. Since the general VNE problem is computationally hard, it is a pragmatic way to decompose VNRs into several specific substructures of paths and cycles and then effectively embed them separately. Embedding across multiple substrate networks and distributed embeddings of the VNE have been shown to be feasible \cite{s1:b5,s1:b9}. This makes the idea of decomposing VNRs and embedding separately practicable. Therefore, a devoted study to explore the characteristics of path and cycle embeddings is desired, which has not yet been researched. 

In this work, we shall systematically, from theoretical hardness analysis to practical algorithm design, investigate the VNE problem for the two special topologies.

\section{Network Models and Problem Description}
\label{sec:nmandpd}
In this section, we first present the network models considered in this paper and then give the formulation of the VNE. Some necessary notations are summarized in Table \ref{tab: notations}.
\begin{table}[!htp]
 \centering
\caption{Notations} \label{tab: notations}
\begin{tabular}{|m{2cm}|m{12cm}|}
\hline
\textbf{Notation} & \textbf{Description}\\
\hline
~~~~$G^s(V^s,E^s)$ & A substrate network, where $V^s$ is the set of SNs, and $E^s$ is the set of Substrate Links (SLs)\\
\hline
~~~~~~~~~$v^s$  &$v^s\in V^s$, a Substrate Node (SN)\\
\hline
~~~~~~~~~$e^s$ & $e^s \in E^s$, a Substrate Link (SL)\\ 
\hline
~~~~~~~$v^s_iv^s_j$ & $v^s_iv^s_j \in E^s$, the SL connecting $v^s_i \in V^s$ and $v^s_j \in V^s$\\ 
\hline
~~~~~~$\mathcal{P}_{v^s_iv^s_j}$ & The set of all substrate paths from $v^s_i$ to $v^s_j$ in $G^s$\\
\hline
~~~~~~$P_{v^s_iv^s_j}$ & $P_{v^s_iv^s_j} \in \mathcal{P}_{v^s_iv^s_j}$, a substrate path from $v^s_i$ to $v^s_j$\\
\hline
~~~~~~$P^s$ & A substrate network which is a path with $|P^s|$ SLs on it\\
\hline
$C^s(v^s_1,...,v^s_mv^s_1)$ & A substrate cycle with clockwise order of SNs $\textit{i.e.}$, starting from $v^s_1$ clockwise to $v^s_m$, where $m$ is the number of SNs on $C^s$ \\
\hline
\hline
~~~~$G^r(V^r,E^r)$  &A Virtual Network Request (VNR), where $V^r$ is the set of VNs and $E^r$ is the set of VLs\\
\hline
~~~~~~~~$v^r$  &$v^r\in V^r$, a Virtual Node (VN)\\
\hline
~~~~~~~~$e^r$ & $e^r \in E^r$, a Virtual Link (VL)\\ 
\hline
~~~~~~$v^r_iv^r_j$ & $v^r_iv^r_j \in E^r$, the VL connecting $v^r_i \in V^r$ and $v^r_j \in V^r$\\ 
\hline
$P^r(v^r_1v^r_2...v^r_n)$ &A path VNR with $n$ VNs, $\textit{i.e.}$, from $v^r_1$, through $v^r_{j, 2\leq j \leq n-1}$, to $v^r_n$. \\
\hline
~~~~~~$|P^r|$  &The length of $P^r$, $\textit{i.e.}$, the number of VLs in $P^r$ \\
\hline
$C^r(v^r_1,...,v^r_nv^r_1)$ & A cycle VNR with clockwise order of VNs, where $n$ is the number of VNs on $C^r$\\
\hline
\hline
~~~~$CPU(v^s)$ &The CPU capacity of SN $v^s$ \\
\hline
~~~~$BW(e^s)$ & \\ or $BW(v^s_iv^s_j)$ & The bandwidth capacity of SL $e^s$ or $v^s_iv^s_j$\\
\hline
~~~~$CPU(v^r)$ & The CPU demand of VN $v^r$ \\
\hline
~~~~$BW(e^r)$ & \\ or $BW(v^r_iv^r_j)$ & The bandwidth demand of SL $e^r$ or $v^r_iv^r_j$\\
\hline
~~~~~~$\mathcal{F}_{v^r}$ & $\{v^s \in V^s | CPU(v^s) \geq CPU(v^r) \}$, $\textit{i.e.}$, the set of feasible SNs whose CPUs are not smaller than VN $v^r$'s\\
 \hline
~~$\mathcal{F}_{e^r}$ or $\mathcal{F}_{v^r_iv^r_j}$ & $\{e^s \in E^s| BW(e^s) \geq BW(e^r) \text{~or~} BW(v^r_iv^r_j) \}$, \textit{i.e.}, the set of feasible SLs whose bandwidths are not smaller than $e^r$ or $v^r_iv^r_j$\\
\hline
~~~$v^r \rightarrow v^s$ & $v^r$ is embedded on $v^s$\\
\hline
~~~$e^r \rightarrow e^s$ & The embedded substrate path of $e^r$ which passes through $e^s$\\
\hline
\end{tabular}
\end{table}

\subsection{Network Models}
\label{subsec: NM}
\subsubsection{Substrate Network} In this paper, the substrate network $G^s(V^s,E^s)$ is an undirected connected graph. Usually, there are two main resources on the substrate network managed by the InP, the computing capabilities of the SNs and the bandwidths of the SLs. Here, we denote by $CPU(v^s)$ the computing capability of each SN $v^s$, and denote $BW(e^s)$ as the bandwidth of SL $e^s$. Figure \ref{fig:f3} gives an example of a 4-node substrate network, where each number in the square indicates the $CPU$ capability of the SN and each number beside the SL indicates its $BW$.    
\begin{figure}[!htb]
\centering
\includegraphics[height=0.23\textwidth]{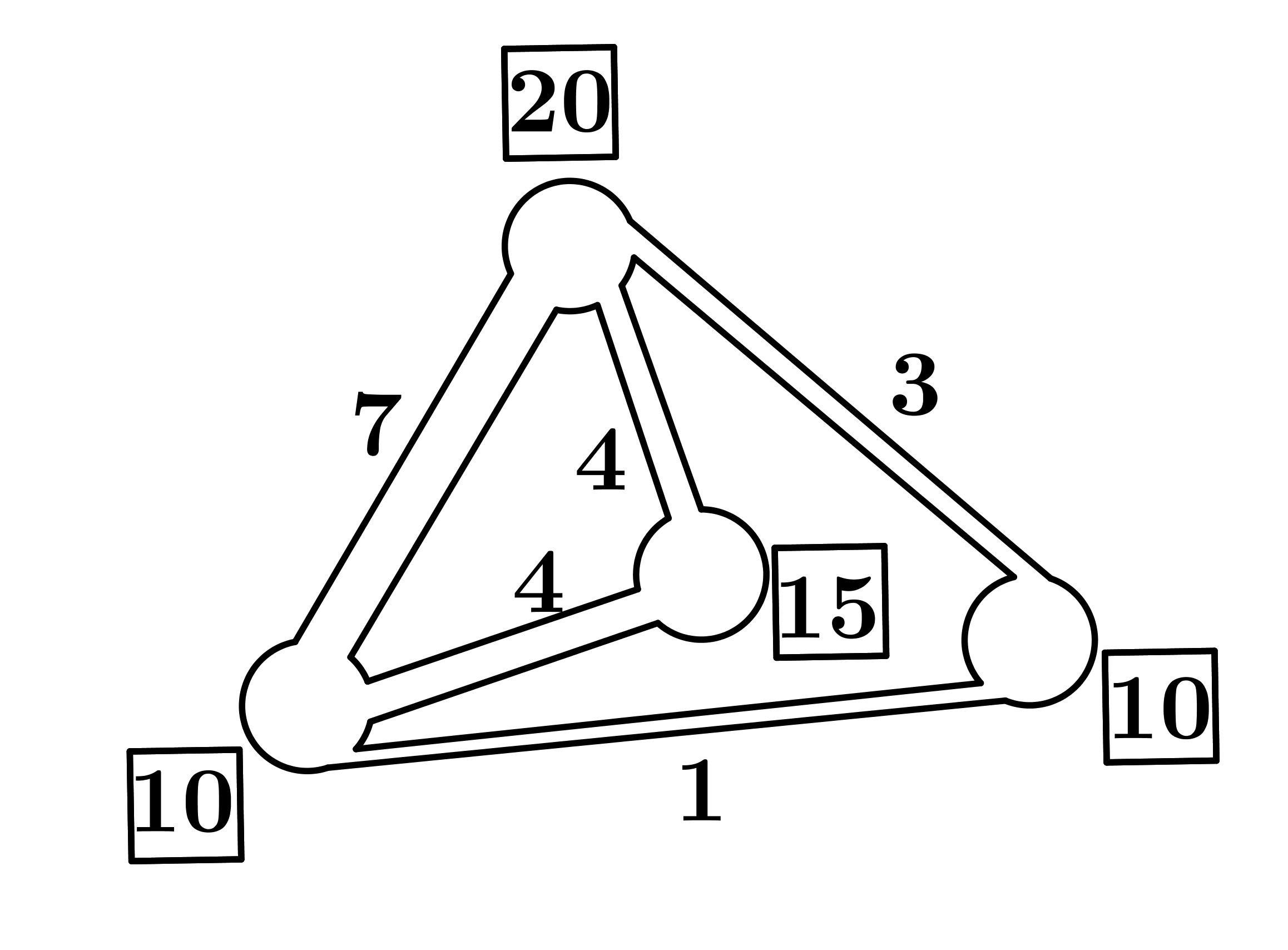}
\caption{A 4-node substrate network.}
\label{fig:f3}
\end{figure}

\subsubsection{Virtual Network Request} A VNR $G^r(V^r,E^r)$ is still modeled by an undirected connected graph, which is constructed by the SP according to the service demands of clients. The demanded computing capability of each $v^r \in V^r$ is $CPU(v^r)$. 
%In some studies such as Location-constrained VNE (LC-VNE) \cite{s1:b8}, the VNR has location constraints that some VNs can only be embedded onto specific locations of SNs. For this kind of VNRs, besides the computing capability, there exists an array of attribute requirements $\mathcal{A}(v^r)$ on each $v^r$ such that the $v^r$  can only be embedded onto a subset of $V^s$: $\Phi(v^r)=\{v^s \in V^s|\mathcal{A}(v^r)\rightarrow \mathcal{A}(v^s) \}$, where $\mathcal{A}(v^s)$ satisfies the requirements of $\mathcal{A}(v^r)$.
The demanded bandwidth of each $e^r \in E^r$ is $BW(e^r)$. In this work, we focus on two special topologies of VNRs $\textit{i.e.}$, paths and cycles. Figure \ref{fig:f4} illustrates a path VNR and a cycle VNR, where, similarly, the numbers in squares (beside VLs respectively) indicate the corresponding $CPU$s of VNs ($BW$s of VLs respectively).
\begin{figure}[!htb]
\centering
\includegraphics[height=0.15\textwidth]{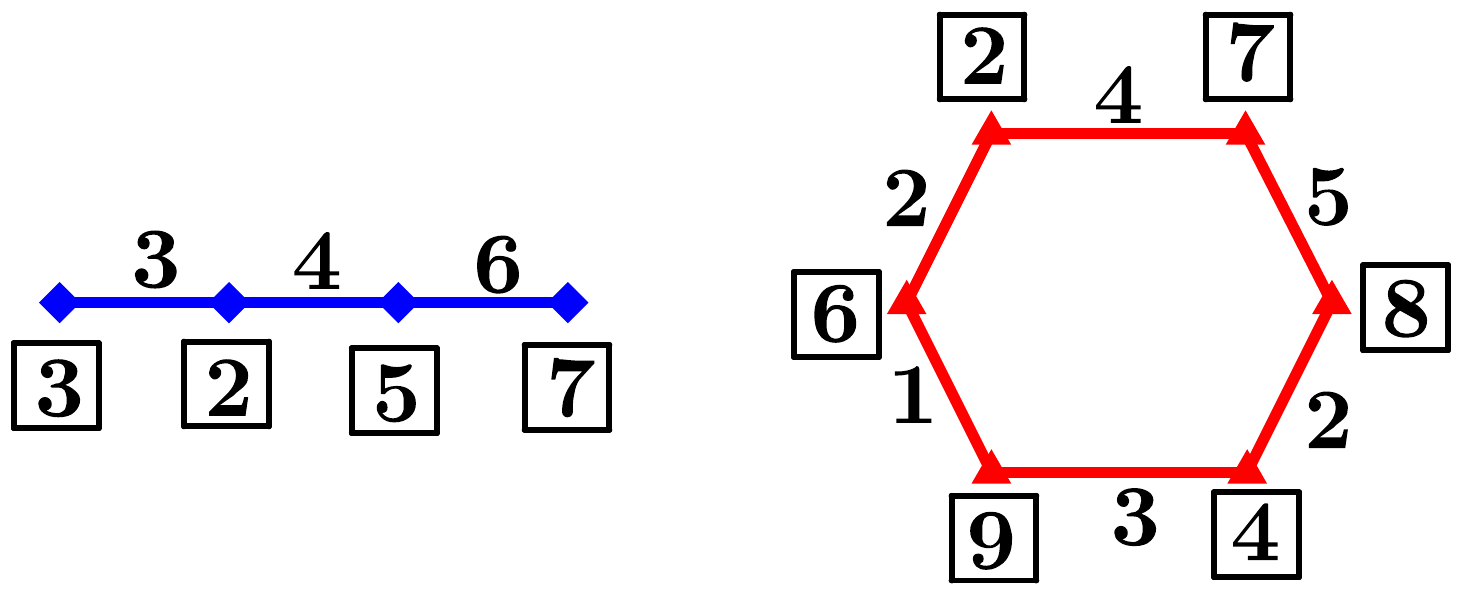}
\caption{An example of path and cycle VNRs.}
\label{fig:f4}
\end{figure}

%\subsubsection{The scope of this paper}
%In the study of the VNE problem, many papers \cite{s1:b7,s1:b11,sr:b1} focus on the generic setting VNRs, where only the $CPU$ requirement of each VN is taken into account. There are also some works \cite{s1:b8} considering special setting VNRs, where an array of attribute requirements $\mathcal{A}(v^r)$ is imposed by the VNR. In this paper, we call the former as G-VNE and the latter as S-VNE. From the theoretical point of view, the G-VNE is in fact a special case of the S-VNE by setting each $\Phi(v^r)=V^s$, which means that all SNs $v^s$ are eligible to embed the VN $v^r$.  \textbf{In this paper, we mainly focus on the G-VNE problem.} Our work addresses two aspects of the VNE problem in paths and cycles as follows.
%\begin{itemize}
%\item Hardnesses Analysis: Since the G-VNE is a special case of the S-VNE, all of the conclusions about the $\mathcal{NP}$-hardnesses and inapproximability of the G-VNE in paths and cycles can be directly applied to the S-VNE with the same topology.
%\item Heuristic Design: As we shall see, the VNE problem is still extremely hard even in paths and cycles. Effective heuristics, which capture the "nature" of the problem to some extent, are thus more practical to solve the VNE problem.
%\end{itemize}

\subsection{Problem Formulation}
\label{subsec:pf}

\subsubsection{Constraints}
As mentioned above, the VNE problem contains two constraints, $\textit{Node Mapping}$ and $\textit{Link Mapping}$ constraints.

\begin{figure}[!htp]
\centering
\includegraphics[height=0.3\textwidth]{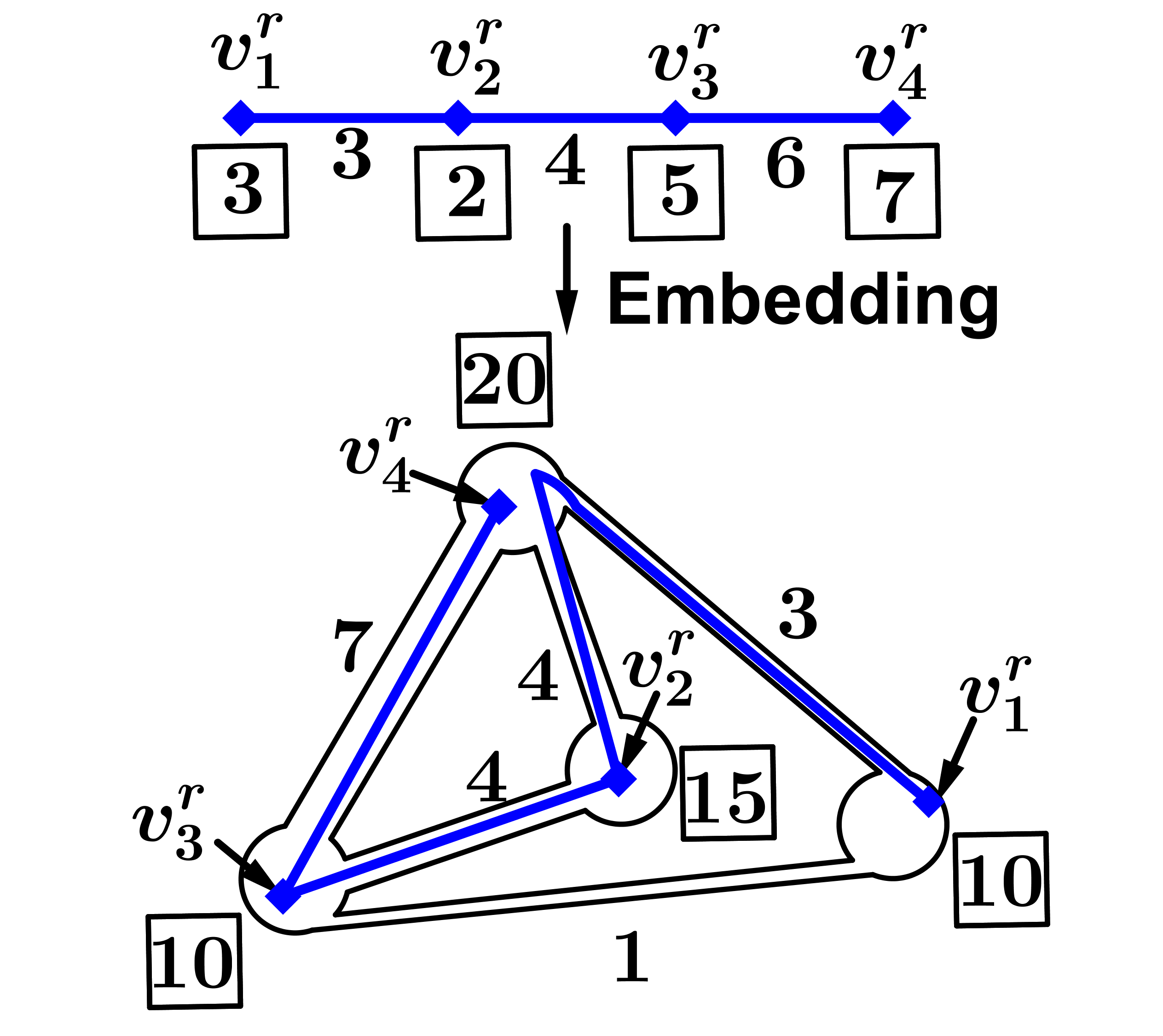}
\caption{Node and link mapping.}
\label{fig:f5}
\end{figure}

\paragraph{$\textit{Node Mapping Constraint}$}
For the node mapping,  given the $G^s(V^s,E^s)$ and a VNR $G^r(V^r,E^r)$, each VN $v^r \in V^r$ must be embedded onto such an SN $v^s \in V^s$, $\textit{i.e.}$, $v^r \rightarrow v^s$, that $CPU(v^s) \geq CPU(v^r)$. Meanwhile, for each SN $v^s$ on which some VNs $v^r \in V^r$ are embedded, $CPU(v^s) \geq \sum_{v^r\rightarrow v^s} CPU(v^r)$. Figure \ref{fig:f5} demonstrates the node mapping where the numbers in squares indicate the $CPU$ requirements or capabilities of VNs or SNs respectively. Besides, any two VNs $v^r_i$ and $v^r_j$ of a same VNR can not be embedded on a same SN $v^s$, \textit{i.e.}, if $v^r_1 \rightarrow v^s$ and $v^r_2\rightarrow v^s$ then $v^r_1=v^r_2$ \cite{s1:b8}.

%\begin{itemize}
%\item $\textit{Many-to-One}$ (M2O): 
%In this type of node embedding, given the $G^s(V^s,E^s)$ and a VNR $G^r(V^r,E^r)$, each VN $v^r \in V^r$ must be embedded onto such an SN $v^s \in V^s$, $\textit{i.e.}$, $v^r \rightarrow v^s$, that $CPU(v^s) \geq CPU(v^r)$. Meanwhile, for each SN $v^s$ on which some VNs $v^r \in V^r$ are embedded, $CPU(v^s) \geq \sum_{v^r\rightarrow v^s} CPU(v^r)$. Figure \ref{fig:f5} illustrates an example of the M2O embedding. 
%\begin{figure}[!htb]
%\centering
%\includegraphics[height=0.23\textwidth]{}
%\caption{$\textit{Many-to-One}$ (M2O) node embedding.}
%\label{fig:f5}
%\end{figure}
%As the M2O embedding sometimes makes a VNR more vulnerable to the substrate network failure \cite{s1:b8}, some literature add an additional constraint that 

\paragraph{$\textit{Link Mapping Constraint}$} For each VL $v^r_iv^r_j$, assuming $v^r_i \rightarrow v^s_i$ and $v^r_j \rightarrow v^s_j$, $v^r_iv^r_j$ should be embedded on a substrate path $P_{v^s_iv^s_j} \in \mathcal{P}_{v^s_iv^s_j}$, and for each SL $e^s$ on $P_{v^s_iv^s_j}$, $BW(e^s) \geq BW(v^r_iv^r_j)$. 
%Especially, if in M2O model, the two ends $v^r_i$ and $v^r_j$ of a VL are embedded on a same SN, say $v^s$, then VL $v^r_iv^r_j$ should be embedded on a closed substrate path (\textit{i.e.}, cycle) starting and ending at $v^s$
Meanwhile, for each SL $e^s$, through which the embedded substrate paths of VLs $e^r$ pass, \textit{i.e.}, $e^r \rightarrow e^s$, $BW(e^s) \geq \sum_{e^r \rightarrow e^s}BW(e^r)$ as shown in Fig \ref{fig:f5}.

Here, we give the definition of the decision problem \textbf{$Emb(\cdot,\cdot)$} as follows.

\begin{definition}
\label{de1}
$Emb(G^s,G^r)$ is such a decision problem that its answer is \textbf{Yes} iff the VNR $G^r$ can be embedded on the substrate network $G^s$ satisfying the node and link mapping constraints, and \textbf{No} otherwise.  
\end{definition}

The definition can be extended to a set of $n$ VNRs: $Emb(G^s,\{G^r_1,G^r_2,...,G^r_n\})$ whose answer is \textbf{Yes} iff all of $G^r_{i,1\leq i \leq n}(V^r_i,E^r_i)$ can be embedded on the $G^s$ while simultaneously satisfying the node and link mapping constraints, and \textbf{No} otherwise. Here, the node mapping constraint of a set of VNRs embedded on a substrate network is that for each SN $v^s$, $CPU(v^s) \geq \sum_{v^r \rightarrow v^s} CPU(v^r), \forall v^r \in V^r_1 \cup V^r_2\cup...\cup V^r_n$, and the link constraint means for each SL $e^s$, $BW(e^s)\geq \sum_{e^r \rightarrow e^s} BW(e^r)$, $\forall e^r\in E^r_1 \cup E^r_2 \cup...\cup E^r_n$.

\subsubsection{Objective Functions}
There are two main objective functions in the study of the VNE problem: The acceptance ratio and the revenue.
\paragraph{The Acceptance Ratio (AcR)}
Given a substrate network $G^s$ and a set $\{G^r_1,G^r_2,...,G^r_n\}$ of VNRs, the objective  is to maximize the number of VNRs that can be embedded on the $G^s$. We denote by \textbf{AcR} the acceptance ratio problem as formally defined below.

\begin{equation*}
\begin{aligned}
&\text{Maximize} \quad |S| \textrm{~~~~(\textbf{AcR})},\\
& \textbf{\textit{s.t.}} \quad Emb(G^s,S) ~ \text{is \textbf{Yes}}, \\ & \quad ~~~S \subseteq \{G^r_1,G^r_2,...,G^r_n\}. 
\end{aligned}
\end{equation*}

\paragraph{The Revenue (Rev)} Each VNR $G^r_i$ is associated with a revenue $w_i$. Its objective is to maximize the total revenue of VNRs that can be embedded on $G^s$. We denote by \textbf{Rev} the revenue problem defined below.   
\begin{equation*}
\begin{aligned}
&\text{Maximize} \quad \sum_{G^r_i\in S}w_i \textrm{~~~~(\textbf{Rev})},\\
& \textbf{\textit{s.t.}} \quad Emb(G^s,S) ~ \text{is \textbf{Yes}}, \\ & \quad ~~~S \subseteq \{G^r_1,G^r_2,...,G^r_n\}. 
\end{aligned}
\end{equation*}

One may notice that AcR actually is a special case of Rev by setting each revenue $w_i$ to be one. In this paper, the set of VNRs $\{G^r_1,G^r_2,...,G^r_n\}$ is particularly a path set or a cycle set, \textit{i.e.}, $\{P^r_1,P^r_2,...,P^r_n\}$ or $\{C^r_1,C^r_2,...,C^r_n\}$.

\section{Theoretical Hardness Analysis of Path Embedding}
\label{sec:pm}
In this section, we explore the theoretical harness of the path embedding problem. 
%A basic but fundamental question is how hard it is to embed a single path VNR $P^r$ onto a substrate network $G^s$, \textit{i.e.}, the hardness of the $Emb(P^r, G^s)$ problem. 
%\textbf{The preliminary model is defined with the following network configurations}: \circled{1} The substrate network $G^s(V^s,E^s)$ is that: $\forall e^s \in E^s$, $BW(e^s)=1$; $\forall v^s \in V^s$, $CPU(v^s)=deg(v^s)$, where $deg(v^s)$ is the degree of $v^s$. \circled{2} The path VNR $P^r(v^r_1v^r_2...v^r_n)$ is that: $\forall i$, $CPU(v^r_i)=1$; $\forall v^r_iv^r_{i+1}$, $BW(v^r_iv^r_{i+1})=1$. This preliminary model can be interpreted as follows: Each SL $e^s$, whose bandwidth is 1 unit, can be utilized by at most one VL $e^r$, and the computing capability of each SN $v^s_i$ $CPU(v^s_i)=\sum_{v^s_iv^s_j \in E^s} BW(v^s_iv^s_j)$. \textbf{All the substrate networks and VNRs discussed in this section are in the framework of the preliminary model and G-VNE.}
Given a substrate network $G^s$ and a path VNR $P^r$, an elementary and essential question is how hard $Emb(G^s,P^r)$ is. Before answering it, we give some terminologies in graph theory as follows.

\begin{itemize}
\item \textit{Trail}: In a graph $G(V,E)$, a \textit{trail} is such a subgraph that can be expressed as a sequence of vertices "$v_1,v_2,...,v_n$", where for any two adjacent vertices $v_i$ and $v_{i+1}$, $1\leq i \leq n-1$, $v_iv_{i+1}$ is an edge, \textit{i.e.}, $v_iv_{i+1}\in E$ and no repeated edge occurs in the trail, \textit{i.e.}, the pair ($v_i$, $v_{i+1}$) only occurs once in the trail. A \textit{closed trail} is such a trail "$v_1,v_2,...,v_n$", with $v_1=v_n$. Figure \ref{fig:f7} illustrates a trail $T_1$ (left in blue), a closed trail $T_2$ (right in red), and the corresponding vertex sequences.
%The length of a trail is the number of edges on it. For example, in Fig. 7, $T_1$'s length is 6 while $T_2$'s length is 7. 

\begin{figure}[!htb]
\centering
\includegraphics[height=0.2\textwidth]{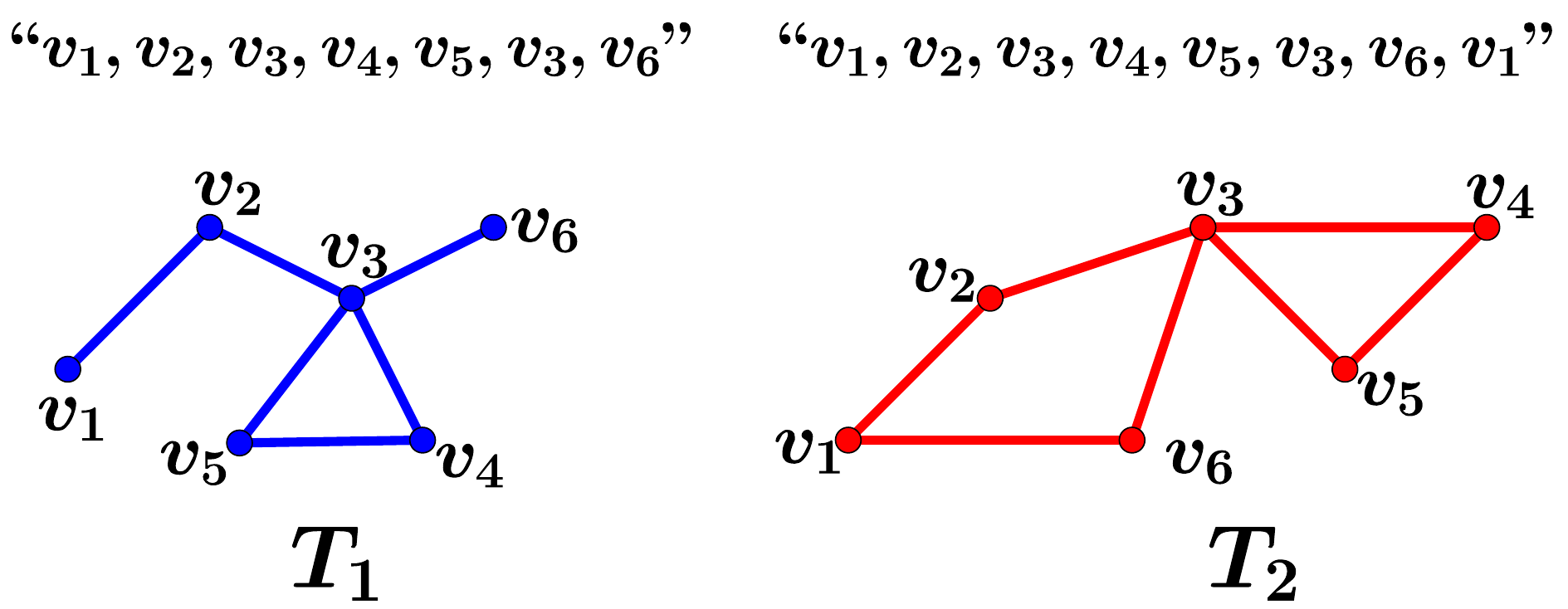}
\caption{An example of trail and closed trail.}
\label{fig:f7}
\end{figure}

\item \textit{Eulerian trail}: In a graph $G$, an Eulerian trail is a trail traversing all edges. Notice that not every graph has an Eulerian trail. An \textit{Eulerian circuit} is a closed Eulerian trail. Therefore, in Fig. \ref{fig:f7}, if we treat the two trails as two graphs (not subgraphs), then they can also be regarded as an Eulerian trail and an Eulerian circuit respectively. 

\item \textit{Eulerian graph}: A graph $G$ is called Eulerian graph iff it has an Eulerian circuit. In Fig. \ref{fig:f7}, if we treat $T_2$  as a graph, then $T_2$ is an Eulerian graph.

\item \textit{Supereulerian graph}: A graph $G$ is called Supereulerian graph iff it contains a spanning subgraph\footnote{A spanning subgraph of $G$ is such a subgraph that contains all vertices of $G$.} which is an Eulerian graph. Similarly, if a graph $G$ is formed by adding one edge $v_2v_6$ to $T_2$ in Fig. \ref{fig:f7}, \textit{i.e.}, $G=T_2+v_2v_6$, then $G$ is a Supereulerian graph.
\end{itemize}

\subsection{The Hardness of $Emb(G^s,P^r)$}
\label{subsec:hard}

Let $G^s(V^s,E^s)$ and $P^r(V^r, E^r)$ be the substrate network and the path VNR respectively. To characterize the hardness of $Emb(G^s,P^r)$, we adopt the \textit{uniform setting} of the $CPUs$ and $BWs$. 
\begin{definition}
\textit{uniform setting}: \circled{1} $\forall v^s \in V^s$, $CPU(v^s)=2$ and  $\forall e^s \in E^s$, $BW(e^s)=1$;  \circled{2} $\forall v^r \in V^r$, $CPU(v^r)=1$ and  $\forall e^r \in E^r$, $BW(e^r)=1$.
\end{definition}
The number of VNs that can be embedded is at most $|V^s|$. Thus, a special case of $Emb(G^s,P^r)$ is to answer whether a path VNR $P^r(v^r_1v^r_2,...,v^r_{|V^s|})$ can be embedded onto $G^s(V^s,E^s)$. 
The following lemma gives the necessary and sufficient condition that whether a $P^r$ of $|V^s|$ VNs can be embedded onto $G^s$. 
\begin{lemma}
\label{lem0}
In the uniform setting, a $P^r$ of $|V^s|$ VNs can be embedded on the $G^s$ \textbf{if and only if} $G^s$ has a trail traversing all SNs. In other words, $G^s$ contains a spanning subgraph having an Eulerian trail.
\end{lemma}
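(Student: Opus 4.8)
The plan is to establish the two directions of the biconditional separately, translating the combinatorial embedding conditions of the \textit{uniform setting} into a statement about trails in $G^s$. First I would unpack what the uniform setting buys us. Since every SN has $CPU(v^s)=2$ while every VN demands $CPU(v^r)=1$, each SN can host at most two VNs (two would already use its full CPU, and by the node mapping constraint distinct VNs of the same VNR cannot share an SN, so in fact each SN hosts at most \emph{one} VN of the single path VNR $P^r$). Since every SL has $BW(e^s)=1$ and every VL demands $BW(e^r)=1$, each SL can carry the embedded substrate path of at most one VL. These two observations are the bridge between ``embedding'' and ``trail.''

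For the ``only if'' direction, I would start from a valid embedding of $P^r(v^r_1 v^r_2 \cdots v^r_{|V^s|})$. Because $P^r$ has $|V^s|$ VNs and each SN hosts at most one VN, the node mapping $v^r_k \rightarrow v^s_{\sigma(k)}$ is a bijection onto $V^s$, so every SN is used exactly once. Each consecutive VL $v^r_k v^r_{k+1}$ is mapped to a substrate path $P_{v^s_{\sigma(k)} v^s_{\sigma(k+1)}}$, and the bandwidth constraint $BW(e^s)\geq \sum_{e^r\rightarrow e^s}BW(e^r)$ together with unit bandwidths forces these substrate paths to be pairwise edge-disjoint. Concatenating them in order yields a walk that visits all SNs and uses no SL twice; that is precisely a trail traversing all SNs, i.e. an Eulerian trail of a spanning subgraph of $G^s$.

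For the ``if'' direction, I would reverse the construction: given a trail $v^s_{i_1}, v^s_{i_2}, \ldots$ of $G^s$ visiting all $|V^s|$ SNs, I read off from it an ordering in which the SNs first appear and assign the VNs $v^r_1, v^r_2, \ldots$ to the SNs in that order, then realize each VL along the corresponding edge-disjoint segment of the trail. The constraints are satisfied because the trail repeats no edge (so bandwidths hold with equality at worst) and each SN carries one VN (so CPUs hold). The main obstacle, and the point deserving care, is the subtle gap between ``a trail traversing all SNs'' and a genuine path embedding: a trail may revisit a \emph{vertex} (SN) even though it never repeats an \emph{edge}, whereas the VNs of $P^r$ must land on distinct SNs. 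I would resolve this by arguing that the uniform CPU bound makes revisits harmless for the node assignment — each SN still anchors exactly one VN, and the trail's intermediate re-passes through an SN merely transit its bandwidth without placing a VN there — and by invoking the stated equivalence ``trail traversing all SNs $\Leftrightarrow$ spanning subgraph with an Eulerian trail,'' which follows from the definitions of Eulerian trail and spanning subgraph given in the terminology list above.
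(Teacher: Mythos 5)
Your proof is correct and follows essentially the same route as the paper's: the forward direction concatenates the pairwise edge-disjoint substrate paths (forced by unit bandwidths) into a trail through all SNs, and the converse reads off the first-appearance order of SNs along the trail and routes each VL on the corresponding edge-disjoint segment. If anything, you are slightly more careful than the paper on the one delicate point --- that a trail may revisit vertices --- correctly observing that repeated passes through an SN only transit bandwidth and never host a second VN, so the embedding constraints still hold.
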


\begin{proof}
If $P^r(v^r_1v^r_2,...,v^r_{|V^s|})$ can be embedded on $G^s$ by, \textit{w.l.o.g.}, $v^r_i \rightarrow v^s_i, \forall i$, then we can express this embedding way as a sequence "$v^s_1P_{v^s_1v^s_2}v^s_2P_{v^s_1v^s_3}...v^s_{|V^s|}$". This sequence, if expressing each $P_{v^s_iv^s_{i+1}}$ by its vertex sequence, in fact is a trail which traverses all SNs in $G^s$ (but not necessary all SLs). 

Conversely, if $G^s$ has a trail, say $T$, traversing all SNs, let, \textit{w.l.o.g.}, "$v^s_1,...,v^s_2,...,v^s_i,...,v^s_{|V^s|}$" be $T$'s corresponding vertex sequence. By the definition of trail, there must be a path connecting $v^s_i$ and $v^s_{i+1}$, say $P_{v^s_iv^s_{i+1}}, \forall i$, and $P_{v^s_iv^s_{i+1}} \cap P_{v^s_jv^s_{j+1}}=\emptyset, \forall i \neq j$. Therefore we can embed $P^r$ by "$v^s_1P_{v^s_1v^s_2}v^s_2P_{v^s_1v^s_3}...v^s_{|V^s|}$".
\end{proof}

\begin{lemma}
\label{lem1}
The hardness of determining whether a graph contains a spanning subgraph having an Eulerian trail (or $SSET$ in short) is equivalent\footnote{The equivalence is under the polynomial-time Turing reduction.} to that of determining whether a graph is a Supereulerian graph (or $SG$). 
\end{lemma}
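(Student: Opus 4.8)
\textbf{Proof proposal for Lemma \ref{lem1}.}

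The plan is to establish the claimed equivalence by exhibiting polynomial-time Turing reductions in both directions between $SSET$ (does a graph contain a spanning subgraph with an Eulerian trail?) and $SG$ (is a graph Supereulerian, i.e., does it contain a spanning Eulerian subgraph?). The key structural fact I would lean on is the classical Eulerian characterization: a connected graph has an Eulerian circuit if and only if every vertex has even degree, and it has an (open) Eulerian trail if and only if exactly zero or two vertices have odd degree. Thus $SSET$ asks for a connected spanning subgraph with at most two odd-degree vertices, whereas $SG$ asks for a connected spanning subgraph with no odd-degree vertices. The whole proof reduces to bridging this gap of "at most two odd vertices" versus "zero odd vertices."

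For the direction $SG \leq_T SSET$, the easy observation is that any Supereulerian graph already contains a spanning Eulerian subgraph, which is in particular a spanning subgraph with an Eulerian trail (a circuit is a closed trail). Hence a Yes-instance of $SG$ is automatically a Yes-instance of $SSET$, but the converse can fail, so a single call does not suffice. First I would handle the reduction $SSET \leq_T SG$: given an input graph $G$ for $SSET$, I would call the $SG$ oracle on $G$ itself and, if that returns No, call it on the family of augmented graphs $G + uv$ obtained by adding one artificial edge between each candidate pair of vertices $u,v$ (or, more cleanly, by attaching a small auxiliary gadget that forces exactly the two endpoints to become the odd-degree vertices of the desired trail). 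A spanning subgraph of $G$ with an Eulerian trail having odd endpoints $u$ and $v$ exists precisely when $G+uv$ has a spanning Eulerian subgraph that uses the new edge $uv$; iterating over the $O(|V^s|^2)$ endpoint pairs and taking the disjunction of the oracle answers decides $SSET$. Since there are only polynomially many pairs, this is a valid polynomial-time Turing reduction.

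The reverse reduction $SG \leq_T SSET$ is handled symmetrically: to test whether $G$ is Supereulerian, I would augment $G$ with a gadget that penalizes open trails, forcing any spanning subgraph-with-trail in the augmented instance to in fact be closed, so that the $SSET$ oracle on the augmented graph answers Yes exactly when $G$ has a spanning Eulerian (closed) subgraph. The hard part will be designing these gadgets so that they neither create spurious spanning trails nor destroy the connectivity and parity bookkeeping — in particular, ensuring that adding an edge or attaching a pendant structure changes the odd/even degree profile in exactly the intended way and that the spanning property is preserved. Once the gadget correctness is pinned down, the equivalence follows immediately by composing the two reductions, and I expect the endpoint-enumeration argument to be the cleanest route, since it keeps the parity manipulation explicit and confined to a single added edge per oracle call.
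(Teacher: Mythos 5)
Your reduction $SSET \leq^P_T SG$ is essentially correct and close in spirit to the paper's: the preliminary oracle call on $G$ itself disposes of the closed-trail case, and once that call returns No, any spanning Eulerian subgraph of an augmented instance is forced to use the added parity-fixing structure, which is exactly the argument needed. One wrinkle you gloss over: when $uv \in E(G)$, the graph $G+uv$ is a multigraph, so ``adding one artificial edge'' is not well-defined in the simple-graph setting (consider $K_2$, where $SSET$ is Yes but neither the call on $G$ nor any simple augmentation by an edge witnesses it). Your parenthetical hedge is the right fix, and it is precisely what the paper commits to: for each pair $u,v$, add a single new vertex $v^*$ joined to both $u$ and $v$. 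Since $v^*$ has degree $2$, any spanning Eulerian subgraph of the augmented graph must contain both edges $v^*u$ and $v^*v$, and deleting $v^*$ leaves a connected spanning subgraph of $G$ whose odd-degree set is exactly $\{u,v\}$, i.e., one admitting an open Eulerian trail from $u$ to $v$; enumerating the $\binom{|V|}{2}$ pairs completes this direction.

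The genuine gap is the direction $SG \leq^P_T SSET$: you state a design goal but exhibit no gadget, and the goal as stated is misaimed. No attachment can force the spanning trail of the augmented graph to ``in fact be closed'' --- any pendant structure you attach contains degree-one vertices, which necessarily become odd endpoints, so the augmented trail is forced \emph{open}, not closed; what you actually want to force closed is its restriction to $G$. The paper's construction does exactly this in one line: pick an arbitrary $v \in V$ and attach two new pendant vertices $u_1,u_2$ to $v$, forming $G^*$. In any spanning subgraph of $G^*$ with an Eulerian trail, the edges $u_1v$ and $u_2v$ are forced by the spanning requirement, so $u_1,u_2$ consume the (at most) two permitted odd-degree vertices; hence every vertex of $V$ has even degree, and deleting $u_1,u_2$ leaves a connected spanning even subgraph of $G$ (the degree of $v$ drops by $2$, preserving parity), so $G$ is Supereulerian. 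Conversely, if $H$ is a spanning Eulerian subgraph of $G$, then $H+u_1v+u_2v$ is a spanning subgraph of $G^*$ with an Eulerian trail from $u_1$ to $u_2$. This is a single oracle call with a three-line correctness check, but without it (or an equivalent construction) half of the claimed equivalence is unproven, so your proposal as written does not yet establish the lemma.
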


Consequently, by Lemmas \ref{lem0} and \ref{lem1}, we have Theorem \ref{the1}.
\begin{theorem}
\label{the1}
In the uniform setting, the hardness of $Emb(G^s,P^r)$ problem is not less than that of determining whether $G^s$ is a Supereulerian graph.
\end{theorem}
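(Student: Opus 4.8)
The plan is to obtain Theorem~\ref{the1} as a direct composition of the two preceding lemmas, by chaining their reductions so that recognizing a Supereulerian graph is reduced to an instance of $Emb(G^s,P^r)$. Since the theorem asserts that $Emb(G^s,P^r)$ is \emph{not less} hard than the Supereulerian recognition problem ($SG$), what I must exhibit is a polynomial-time reduction running \emph{from} $SG$ \emph{into} $Emb$, i.e.\ a way to answer any $SG$ query using an algorithm for $Emb$. The substantive graph-theoretic content already lives in Lemmas~\ref{lem0} and~\ref{lem1}, so the remaining task is essentially to assemble them in the correct direction.

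First I would isolate the special case of $Emb(G^s,P^r)$ in which the path VNR carries exactly $|V^s|$ VNs and all capacities and demands obey the \emph{uniform setting}. By Lemma~\ref{lem0}, for this special case the answer of $Emb(G^s,P^r)$ is \textbf{Yes} precisely when $G^s$ admits a spanning subgraph having an Eulerian trail, i.e.\ precisely when $G^s$ is a \emph{Yes}-instance of the $SSET$ problem. The key observation is that, given any graph $G$, the corresponding instance $(G^s,P^r)$ is constructible in polynomial time: take $G^s=G$ with $CPU(v^s)=2$, $BW(e^s)=1$, and let $P^r$ be a path on $|V(G)|$ VNs with $CPU(v^r)=1$, $BW(e^r)=1$. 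Hence $SSET$ reduces by a many-one reduction to $Emb(G^s,P^r)$, so the latter is at least as hard as $SSET$.

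Next I would invoke Lemma~\ref{lem1}, which states that $SSET$ and $SG$ are equivalent under polynomial-time Turing reductions; in particular $SG$ reduces to $SSET$. Chaining this with the reduction of the previous step and using transitivity of polynomial-time reducibility yields $SG \le_T SSET \le_m Emb(G^s,P^r)$, hence $SG \le_T Emb(G^s,P^r)$. Finally, since the $|V^s|$-VN uniform instances form a subfamily of all $Emb(G^s,P^r)$ instances, any algorithm deciding the general problem in particular decides these instances; therefore the general $Emb(G^s,P^r)$ problem is at least as hard as $SG$, which is exactly the claim.

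The bulk of the real work has already been discharged by the two lemmas, so I do not anticipate a genuine technical obstacle. The only points that need care are bookkeeping ones: fixing the direction of the reduction (it must run from $SG$ into $Emb$, not the reverse, to justify ``not less than''), and verifying that composing a Turing reduction with a many-one reduction still yields a valid polynomial-time reduction so that transitivity legitimately applies. Degenerate cases, such as a disconnected $G$, require no separate treatment, since such $G$ is simultaneously a \emph{No}-instance of $SG$ and of $SSET$, keeping the correspondence consistent.
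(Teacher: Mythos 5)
Your proposal is correct and matches the paper's own argument: the paper derives Theorem~\ref{the1} exactly by combining Lemma~\ref{lem0} (the uniform-setting instance with a $|V^s|$-VN path VNR is a \textbf{Yes}-instance iff $G^s$ has a spanning subgraph with an Eulerian trail) with Lemma~\ref{lem1} (the polynomial-time Turing equivalence of $SSET$ and $SG$), just as you do. Your additional care about the direction of the reduction and the transitivity of composing a Turing reduction with a many-one reduction only makes explicit what the paper leaves implicit.
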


Unfortunately, given a graph $G$, it is $\mathcal{NP}$-hard to determine whether $G$ is a Supereulerian graph \cite{s4:b1}. Thus, $Emb(G^s,P^r)$ is also $\mathcal{NP}$-hard.

\subsection{The inapproximability of path embedding}

After obtaining the $\mathcal{NP}$-hardness of $(G^s, P^r)$, immediately, we can get the inapproximability of the AcR and Rev problems of path embedding in the uniform setting as follows.

\begin{theorem}
\label{the:inapr1}
For path embedding in the uniform setting, both the AcR and Rev problems have an $\mathcal{NP}$-hard gap $[\epsilon, 1]\footnote{If an optimization problem has an $\mathcal{NP}$-hard gap $[\alpha,\beta]$, then it has no polynomial-time algorithm  of an approximation ratio $\frac{\beta}{\alpha}$ unless $\mathcal{NP}=\mathcal{P}$ \cite{sr:b2}.}, \forall 0<\epsilon <1$, \textit{i.e.}, unless $\mathcal{NP}=\mathcal{P}$, there is no polynomial-time algorithm of an approximation ratio $\frac{1}{\epsilon}$ for both problems.
\end{theorem}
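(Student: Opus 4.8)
The plan is to obtain the inapproximability as a direct consequence of the $\mathcal{NP}$-hardness of the decision problem $Emb(G^s,P^r)$ established in Theorem \ref{the1}, exploiting the fact that a single-request instance collapses the optimization objective to a $\{0,1\}$-valued quantity so that the gap appears for free, with no amplification or PCP machinery. First I would restrict attention to the family of AcR (resp. Rev) instances consisting of a single path VNR, i.e.\ $n=1$, under the uniform setting. For such an instance the feasible set is $S \subseteq \{P^r_1\}$, so it is either $\emptyset$ or $\{P^r_1\}$; hence the optimum equals $1$ (resp. the revenue $w_1$, which I set to $1$) when $P^r_1$ can be embedded on $G^s$, and $0$ otherwise. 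Deciding which case holds is exactly the problem $Emb(G^s,P^r_1)$, shown $\mathcal{NP}$-hard via Theorem \ref{the1}. The map sending an instance $(G^s,P^r_1)$ of the decision problem to the AcR instance $(G^s,\{P^r_1\})$ is therefore a trivial polynomial-time gap reduction: a YES instance yields optimum $1$ and a NO instance yields optimum $0$.

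Next I would argue that, because the optimum takes only the values $0$ and $1$, any multiplicative approximation is as powerful as exact decision, which is precisely where the gap $[\epsilon,1]$ comes from. Suppose toward a contradiction that, for some fixed $0<\epsilon<1$, a polynomial-time algorithm achieves approximation ratio $\frac{1}{\epsilon}$. On a YES instance (optimum $1$) it must return a solution of value at least $\epsilon\cdot 1=\epsilon>0$; since that value is a nonnegative integer bounded above by the optimum, it is forced to be $1$, i.e.\ the algorithm embeds $P^r_1$. On a NO instance (optimum $0$) it can only return $0$. Thus it separates the two cases and solves $Emb(G^s,P^r_1)$ in polynomial time, contradicting $\mathcal{NP}$-hardness unless $\mathcal{NP}=\mathcal{P}$. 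Equivalently, it is $\mathcal{NP}$-hard to distinguish instances with optimum at least $1$ from those with optimum at most $\epsilon$, which is the claimed $\mathcal{NP}$-hard gap $[\epsilon,1]$ in the sense of the footnote; since $\epsilon\in(0,1)$ was arbitrary, no finite ratio $\frac{1}{\epsilon}$ is attainable.

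Finally, since AcR is the special case of Rev obtained by setting every $w_i$ to $1$ (as noted after the problem formulations), the same single-VNR construction transfers verbatim to Rev, so no separate argument is needed. The only point that genuinely warrants care — and is really the crux of the whole statement — is the observation that a $\{0,1\}$-valued optimum makes multiplicative approximation coincide with exact solvability; everything else is an immediate specialization of the already-proven hardness of $Emb(G^s,P^r)$. I therefore do not anticipate a substantive obstacle: the integrality of the objective supplies the gap automatically, and the proof reduces to carefully stating the single-request reduction and the forced rounding of the returned value.
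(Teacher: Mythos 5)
Your proposal is correct and takes essentially the same route as the paper's own proof: both reduce from the single-VNR instance $(G^s,\{P^r\})$, observe that the optimum is at least $1$ when $Emb(G^s,P^r)$ is \textbf{Yes} and below $\epsilon$ (indeed $0$) when it is \textbf{No}, invoke the $\mathcal{NP}$-hardness of $Emb(G^s,P^r)$ to get the $\mathcal{NP}$-hard gap $[\epsilon,1]$, and handle Rev by noting that AcR is its special case. Your explicit unwinding of why a $\frac{1}{\epsilon}$-approximation on a $\{0,1\}$-valued objective would decide $Emb(G^s,P^r)$ is merely the footnoted fact from \cite{sr:b2} spelled out, not a different argument.
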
  

\begin{proof}
Given an instance of the AcR problem consisting of a $G^s$ and a $P^r$, if the answer of $Emb(G^s,P^r)$ is \textbf{Yes}, then $OPT_{AcR}\geq 1$, otherwise, $OPT_{AcR} < \epsilon$, where $OPT_{AcR}$ is the optimal solution of the AcR. Since $Emb(G^s,P^r)$ is $\mathcal{NP}$-hard, the AcR problem in the uniform setting has an $\mathcal{NP}$-hard gap\footnote{For a maximization problem $\Pi$ and an $\mathcal{NP}$-hard decision problem $\Lambda$, if $\Lambda$ is \textbf{Yes} $\Rightarrow$ $OPT_{\Pi} \geq \beta$ and $\Lambda$ is \textbf{No} $\Rightarrow$ $OPT_{\Pi} < \alpha$, where $OPT_{\Pi}$ is the optimal solution of $\Pi$, then $\Pi$ has an $\mathcal{NP}$-hard gap $[\alpha,\beta]$ \cite{sr:b2}.} $[\epsilon, 1]$. For the Rev, as the AcR is its special case, the proof also follows.
\end{proof}

In reality that $CPU$ and $BW$ are arbitrary, the path embedding should be more difficult than in the uniform setting. Here, we present the following theorem to explicitly show the inapproximability of both the AcR and Rev for path embedding in the realistic settings.

\begin{theorem}
\label{the:srin}
For path embedding in reality, unless $\mathcal{NP} \subseteq ZPTIME(n^{polylog(n)})\footnote{$ZPTIME(n^{polylog(n)})$ is the set of languages that have randomized algorithms that always give the correct answer and have expected running time $n^{polylog(n)}$.}$, there is no $\mathcal{O}(\log^{\frac{1}{2}-\epsilon}|E^s|)$ approximation for both the AcR and Rev problems,  where $|E^s|$ is the number of SLs in the substrate network.
\end{theorem}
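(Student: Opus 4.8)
The plan is to give a gap-preserving reduction from the \emph{Edge-Disjoint Paths} (EDP) problem, whose inapproximability matches exactly the bound we are after: unless $\mathcal{NP}\subseteq ZPTIME(n^{polylog(n)})$, EDP on an undirected graph with $m$ edges admits no $\mathcal{O}(\log^{\frac12-\epsilon}m)$-approximation (Andrews--Chuzhoy--Guruswami--Khanna--Talwar--Zhang). Recall that an EDP instance is an undirected graph $G$ together with source--sink pairs $\{(s_i,t_i)\}_{i=1}^{k}$, and the objective is to connect as many pairs as possible by pairwise edge-disjoint paths. Since the reduction below will add only $\mathcal{O}(m)$ substrate links, we will have $|E^s|=\Theta(m)$, so the exponent $\frac12-\epsilon$ is preserved and the claimed hardness for both the AcR and Rev problems transfers verbatim.

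First I would turn edge-disjointness into the link-mapping constraint. Setting $BW(e^s)=1$ for every substrate link and $BW(e^r)=1$ for every virtual link, the summation rule $BW(e^s)\ge\sum_{e^r\to e^s}BW(e^r)$ forces at most one embedded virtual link to traverse any given substrate link; hence two simultaneously embedded path VNRs route along edge-disjoint substrate paths, exactly mirroring EDP. I would then associate to each demand pair $(s_i,t_i)$ a single-link path VNR $P^r_i=(a_ib_i)$, so that embedding $P^r_i$ amounts to selecting a substrate path carrying its unit of bandwidth.

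The crux is anchoring: I must guarantee that $a_i$ maps to $s_i$ and $b_i$ to $t_i$, so that embedding $P^r_i$ is equivalent to \emph{routing pair $i$} rather than some unintended pair. I would first preprocess $G$ so that all terminals are distinct degree-one vertices, attaching to each terminal a private pendant edge (this at most doubles $|E|$ and does not change the set of routable pairs, since each pendant edge is forced to be used by its own demand). Then I would attach a CPU-based anchor gadget to each terminal: give the $2k$ terminal vertices carefully chosen capacities and give each endpoint VN a matching demand, designed so that, for any embeddable subset $S$ of VNRs, the only feasible node mapping places $a_i\to s_i$ and $b_i\to t_i$ for every $P^r_i\in S$. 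With these three ingredients, a subset $S$ of path VNRs is simultaneously embeddable if and only if the corresponding pairs are routable by edge-disjoint paths, whence $OPT_{AcR}=OPT_{EDP}$ and the inapproximability gap is preserved; since AcR is the special case of Rev with all revenues equal to one, the same bound holds for Rev.

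The main obstacle I anticipate is precisely this anchoring step. The node-mapping constraint is of the $\ge$ type and the objective selects an \emph{arbitrary} subset of VNRs, so a naive assignment of large distinct CPU values does not pin each VN to a unique vertex: a high-demand endpoint could always migrate to a higher-capacity anchor, and because not all VNRs are embedded, the clean ``largest-demand-forces-its-anchor'' counting argument that works when every VNR is placed can break down. Designing the capacities (or an alternative pendant/bandwidth gadget) so that anchoring is enforced \emph{robustly under subset selection}, while keeping the instance of polynomial size and $|E^s|=\Theta(m)$, is the step demanding the most care; everything else is bookkeeping on the equivalence between embeddings and edge-disjoint routings.
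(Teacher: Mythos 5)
Your overall architecture matches the paper's proof exactly: both reduce from EDP under the same $\mathcal{NP}\subseteq ZPTIME(n^{polylog(n)})$ assumption, both use unit bandwidths inside $G$ so that the summation constraint $BW(e^s)\ge\sum_{e^r\to e^s}BW(e^r)$ encodes edge-disjointness, both create one path VNR per demand pair, and both keep $|E^s|=\Theta(|E|)$ so the $\log^{\frac12-\epsilon}$ exponent survives. However, the step you explicitly defer --- anchoring the endpoint VNs to the terminals \emph{robustly under subset selection} --- is the entire technical content of the theorem, and your sketch contains no mechanism for it. You correctly diagnose why pure CPU anchoring fails (a high-demand endpoint can migrate to any higher-capacity anchor, and the usual counting argument breaks when only a subset of VNRs is embedded), but you stop at the diagnosis. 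That is a genuine gap, not bookkeeping.

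The paper closes it by playing \emph{two opposing monotone sequences} against each other. Every vertex $v_i$ gets a private pendant copy $v'_i$ through a single link $v_iv'_i$; the copy capacities $CPU(v'_i)=C_{v_i}\times N_{v_i}$ are built by the recursion $C_{v_1}=2$, $C_{v_i}=CPU(v'_{i-1})+1$, so they are strictly increasing and a demand of $C_{v_j}$ cannot fit on any copy of smaller index, while the pendant bandwidths $BW(v_iv'_i)=B_{v_i}\times N_{v_i}$ are built by the reverse recursion $B_{v_n}=2$, $B_{v_i}=BW(v_{i+1}v'_{i+1})+1$, so they are strictly decreasing and a VL of demand $B_{v_j}$ cannot traverse any pendant link of larger index. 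Since $v'_j$ is reachable only through its own pendant link, the CPU constraint forbids migrating ``down'' and the bandwidth constraint forbids migrating ``up'', and their intersection pins each endpoint to exactly the copy of its terminal; the multiplicities $N_{v_i}$ (the number of times $v_i$ occurs among the $s_i$'s and $t_i$'s) scale both quantities so that several VNRs sharing a terminal can co-anchor, which is necessary because EDP paths are only edge-disjoint, not terminal-disjoint. Note also that this mechanism is incompatible with your single-link VNRs $P^r_i=(a_ib_i)$: a single VL carries \emph{one} bandwidth demand along its whole embedded substrate path, so it cannot simultaneously demand $B_{s_i}$ at one pendant, $B_{t_i}$ at the other, and $1$ inside $G$. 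This is precisely why the paper's VNRs are $4$-node paths --- two end VLs carrying the distinct anchor demands and a middle unit-demand VL doing the edge-disjoint routing --- and with single-link VNRs you would be forced back to the CPU-only anchoring you already showed to be unsound.
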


\begin{proof}
First, we introduce the Edge-Disjoint Paths (EDP) problem: Given a connected graph $G(V,E)$ and a set of pairs of vertices $\{(s_1, t_1), (s_2,t_2),...,(s_k,t_k)\}$ on it, the objective of the EDP is to connect as many pairs as possible via edge-disjoint paths. Unless $\mathcal{NP} \subseteq ZPTIME(n^{polylog(n)})$, there is no $\mathcal{O}(\log^{\frac{1}{2}-\epsilon}|E|)$ approximation for the EDP problem \cite{sr:b4}.

Now given an instance of the EDP, \textit{i.e.}, a graph $G(V,E)$ where $V=\{v_1,...v_i,...,v_n\}$ and a set of pairs of vertices $\{(s_1, t_1), (s_2,t_2),...,(s_k,t_k)\}$, we translate it into an instance of the AcR in path embedding as follows. 
To complete the translation, for each $v_i \in V$, we first denote by $N_{v_i}$ the total number of $v_i$ appearing in the sets $\{s_i\}^k_{i=1}$ and $\{t_i\}^k_{i=1}$.

 \begin{figure}[!htb]
 \centering
 \includegraphics[height=0.2\textwidth]{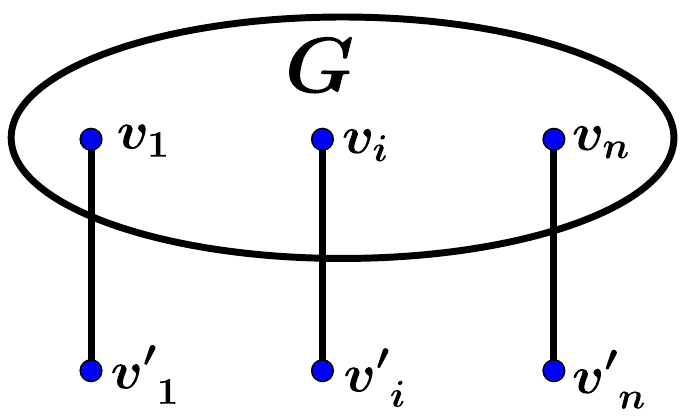}
 \caption{The constructed substrate network.}
 \label{fig:sr17}
 \end{figure}

The construction of the substrate network $G^s$ is that: The set of SNs $V^s$ consists of two parts, $V=\{v_1,...v_i,...,v_n\}$ and $V'=\{v'_1,...,v'_i,...,v'_n\}$ where $V'$ is a copy of $V$. The set of SLs $E^s$ also consists of two parts, $E$ and $\{v_iv'_i\}^n_{i=1}$ as shown in Fig. \ref{fig:sr17}.

The setting of $CPUs$ and $BWs$ is as follows.
\begin{itemize}
\item For each $v_i \in V$, $CPU(v_i)=1$ and for each $e \in E$, $BW(e)=1$.
\item For $V'$, $CPU(v'_1)=C_{v_1} \times N_{v_1}$ where $C_{v_1}=2$ and for $2 \leq i \leq n$, $CPU(v'_i)=C_{v_i}\times N_{v_i}$ where $C_{v_i}=CPU(v'_{i-1})+1$.
\item For $\{v_iv'_i\}^n_{i=1}$, $BW(v_nv'_n)=B_{v_n} \times N_{v_n}$ where $B_{v_n}=2$ and for $n-1 \geq i \geq 1$, $BW(v_iv'_i)=B_{v_i} \times N_{v_i}$ where $B_{v_i}=BW(v_{i+1}v'_{i+1})+1$. 
\end{itemize}

 \begin{figure}[!htb]
 \centering
 \includegraphics[height=0.13\textwidth]{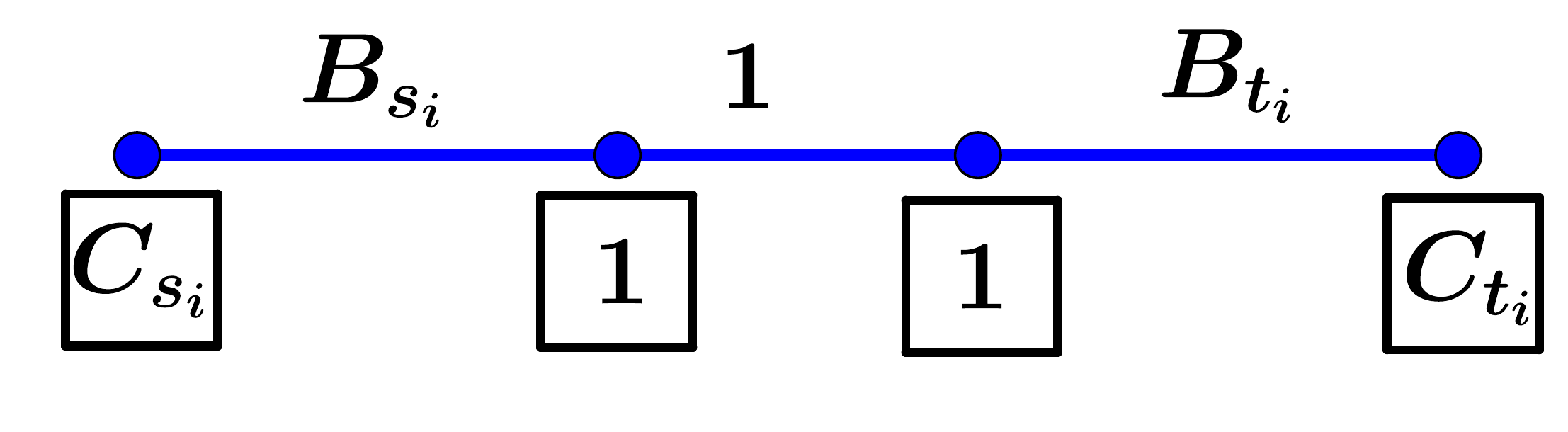}
 \caption{The $i$-th path VNR $P^r_i$.}
 \label{fig:sr18}
 \end{figure}

The path VNRs set is $\{P^r_1,...,P^r_i,...,P^r_k\}$, where the $i$-th $P^r_i$, which corresponds to $(s_i,t_i)$, consists of 4 VNs and its $CPUs$ and $BWs$ requirements are shown as in Fig. \ref{fig:sr18}

From the setting of $CPUs$ and $BWs$, we can see that $C_{v_i}$ is increasing with the $i$ growing while $B_{v_i}$ is decreasing. Thus, following the constraints of node and link mappings, it is easy to see that the two ends of the $P^r_i$ must be embedded on the copies of $s_i$ and $t_i$ respectively, if the $P^r_i$ is embedded.

Hence, for any pairs we can connect by edge-disjoint paths in the instance of the EDP, we can embed their corresponding path VNRs in the substrate network, vice versa.

Besides, since the number of SLs $|E^s|=|E|+|V| \leq 3|E|$, we get the inapproximability.
\end{proof}

\section{Practical Algorithm Design for Path Embedding}
\label{sec:rm}

Theorem \ref{the:srin} implies that for path embedding, it is implausible to find a polynomial-time algorithms of a proper approximation ratio. Thus, we should turn our attention from developing approximation algorithms to designing better heuristic algorithms, which is able to capture the "nature" of the path VNE problem. The following lemma captures the "nature" to some extent. 

\begin{lemma}
\label{exmp}
In the uniform setting, if the substrate network is a path denoted by $P^s$, given a set of $\{P^r_1,P^r_2,...,P^r_n\}$, then the AcR and Rev problems can be solved in polynomial time. 
\end{lemma}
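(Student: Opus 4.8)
The plan is to reduce both AcR and Rev on a path substrate to a single knapsack problem, after first proving a clean structural characterization of when a subset of path VNRs is jointly embeddable. Writing $\ell_i = |P^r_i|$ for the length (number of VLs) of the $i$-th request and $W = |P^s|$ for the number of SLs, I claim that in the uniform setting a subset $S \subseteq \{P^r_1,\dots,P^r_n\}$ can be simultaneously embedded on $P^s$ if and only if $\sum_{P^r_i \in S} \ell_i \le W$. Once this equivalence is in hand, AcR becomes the task of maximizing $|S|$ subject to $\sum_{i\in S}\ell_i \le W$, and Rev becomes that of maximizing $\sum_{i\in S} w_i$ subject to $\sum_{i\in S}\ell_i \le W$, i.e., a cardinality-maximization knapsack and a $0$/$1$ knapsack respectively.

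First I would establish the necessity direction of the characterization. In the uniform setting every SL has $BW(e^s)=1$ and every VL has $BW(e^r)=1$, so the link mapping constraint forces each SL to carry at most one VL; hence the substrate subpaths onto which the VLs of all embedded requests are mapped must be pairwise edge-disjoint, and each such subpath uses at least one SL. As the total number of VLs among the requests in $S$ is exactly $\sum_{P^r_i\in S}\ell_i$, edge-disjointness yields $\sum_{P^r_i\in S}\ell_i \le |E^s| = W$.

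For sufficiency I would give an explicit layout. Assuming $\sum_{P^r_i\in S}\ell_i \le W$, order the requests of $S$ arbitrarily and place them consecutively from one end of $P^s$: map each $P^r_i$ onto a contiguous block of $\ell_i + 1$ SNs, sending consecutive VNs to adjacent SNs and each VL to the single connecting SL, and let each request reuse as its first SN the last SN of its predecessor. Then every used SL carries exactly one VL (so the $BW$ constraint holds with $1 \le 1$), every interior SN hosts one VN and every shared boundary SN hosts exactly two VNs, so $CPU(v^s)=2 \ge \sum_{v^r\to v^s}CPU(v^r)$ holds and no two VNs of the same request ever share an SN. The layout consumes $\sum_{P^r_i\in S}\ell_i \le W$ SLs and $\sum_{P^r_i\in S}\ell_i + 1 \le W+1 = |V^s|$ SNs, so it fits on $P^s$, proving embeddability. (Single-VN requests, with $\ell_i=0$, cost no SL and are placed on any SN with a free CPU slot.)

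Finally I would turn the characterization into algorithms. For AcR, since all items have unit value, sorting the requests by $\ell_i$ in nondecreasing order and greedily admitting the shortest ones until the budget $W$ is exhausted is optimal by a standard exchange argument, giving an $O(n\log n)$ procedure. For Rev, I would run the classical dynamic program for $0$/$1$ knapsack over a table indexed by request index and used capacity $0,1,\dots,W$; this runs in $O(nW)=O(n|P^s|)$ time, which is polynomial in the input size because $W=|P^s| \le |V^s|-1$ is a physical count bounded by the size of $G^s$ (this is precisely why the otherwise $\mathcal{NP}$-hard knapsack is tractable here). The main obstacle is the structural equivalence itself, above all the sufficiency construction, where I must verify that packing requests into contiguous blocks with shared boundary SNs simultaneously respects the per-SN $CPU$ bound, the rule forbidding two VNs of one request on a single SN, and the per-SL $BW$ bound; everything downstream is then routine.
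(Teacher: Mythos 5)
Your proposal is correct and takes essentially the same route as the paper: viewing $P^s$ as a knapsack of capacity $|P^s|$, solving AcR greedily by nondecreasing length and Rev via the $\mathcal{O}(n|P^s|)$ $0$/$1$-knapsack dynamic program, with the same observation that $|P^s|$ being a physical count (not encoded in binary) makes the DP genuinely polynomial. Your explicit two-direction proof of the embeddability characterization $\sum_i \ell_i \le |P^s|$, including the shared-boundary-SN layout and the $CPU(v^s)=2$ check, is a more careful write-up of exactly the feasibility argument the paper compresses into its parenthetical remark.
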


\begin{proof}
\begin{figure}[!htb]
\centering
\includegraphics[height=0.3\textwidth]{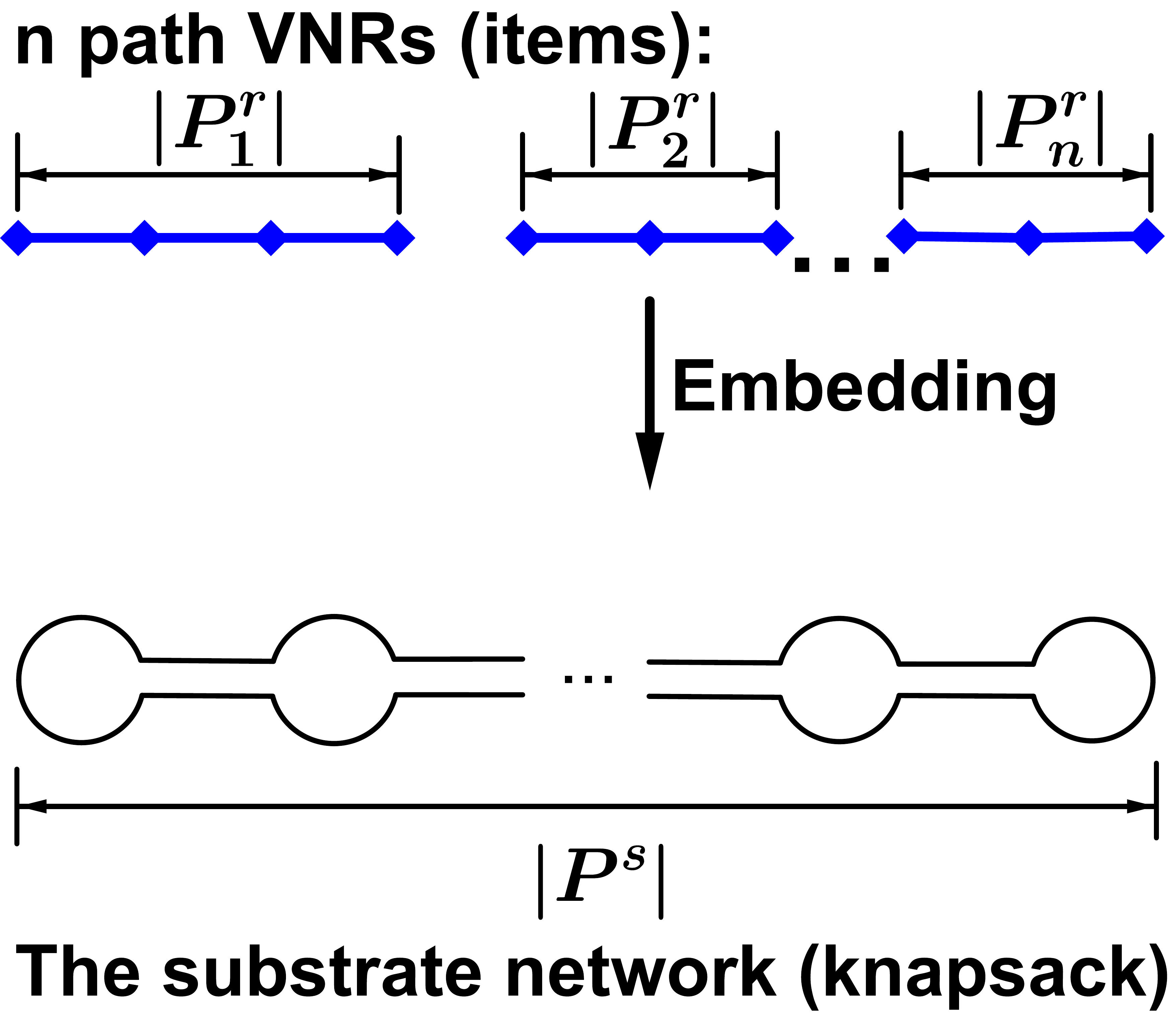}
\caption{An example of the substrate network being a path.}
\label{fig:f10}
\end{figure}
We use Fig. \ref{fig:f10} to demonstrate the substrate path $P^s$ and the set of path VNRs  $\{P^r_1,P^r_2,...,P^r_n\}$. In this special case, we can regard $P^s$ as a "knapsack" with a capacity of $|P^s|$ (edge number), and the $n$ path VNRs $P^r_{i, 1\leq i \leq n}$ as $n$ items with a size of $|P^r_{i, 1\leq i \leq n}|$ respectively.   

The AcR can be easily solved in this way: First arrange the $n$ path VNRs in the increasing order of their sizes, then sequentially pack them into the "knapsack" until cannot (This embedding is feasible since each middle SN $v^s$ of $CPU(v^s)=deg(v^s)=2$, and at most two VNs will be embedded on $v^s$).

The Rev problem in this special case is equivalent to the classical 0-1 Knapsack Problem (KP) and it thus can be solved by a dynamic programming algorithm of time complexity $\mathcal{O}(n|P^s|)$ \cite{s4:b4} \footnote{Although KP is $\mathcal{NP}$-hard, for the substrate network, whose space complexity is $|P^s|$ (not $\log(|P^s|$)), the dynamic programming algorithm runs in polynomial time (rather than pseudo-polynomial time).}.
\end{proof}

Thus, in the uniform setting, if $G^s$ is a path, then the AcR and Rev problems can be easily solved by leveraging KP. This result relatively reflects some essentials of the path VNE problem that can be regarded as "packing" (embedding) a set of "items" (VNRs) into a special "knapsack" (the substrate network). Inspired by this, given a substrate network $G^s$ and a set of  $\{P^r_1,P^r_2,...,P^r_n\}$, we propose a framework of algorithm design for the realistic settings. The main idea is described as follows. First, we decompose $G^s$ into several substrate paths. This phase is thus called path decomposition. By regarding each substrate path as a knapsack and each $P^r_j$ as an item with size $|P^r_j|$ and profit $w_j$, and we then pack these items into  multiple knapsacks, which can be formulated as a Multiple Knapsack problem (MKP). Finally, we assign the $CPU$ and $BW$ resources to those packed path VNRs, and it corresponds to the Multi-Dimensional Knapsack Problem (MDKP). To this end, we review the two well-studied MKP and MDKP. 

\textbf{Multiple Knapsack Problem (MKP)} 

MKP\cite{s4:b4} is a classical variation of KP. In MKP, there are a set of knapsacks $M:=\{1,...,i,...,m\}$ each with positive capacities $b_i$, and a set of items $N:=\{1,...j,...,n\}$ each with size $s_j \geq 0$ and profit $w_j \geq 0$. The goal is to find a subset of the $n$ items of maximum profit which can be packed into the $m$ knapsacks. 

%The ILP formulation of MKP can be written below:

% \begin{equation}
% \label{eqn: MKP}
% \begin{aligned}
% &\text{Maximize} \quad \sum^m_{i=1}\sum^n_{j=1} w_jx_{ij} \textrm{~~~~(ILP-MKP)},\\
% &s.t. \quad \text{Eqs. (\ref{eqn: MKP1})-(\ref{eqn: MKP3})}.
% \end{aligned}
% \end{equation}
% \begin{align}
% \label{eqn: MKP1}
% & \sum^n_{i=1}s_jx_{ij} \leq b_i, &&  1\leq i \leq m\\
% \label{eqn: MKP2}
% & \sum^m_{i=1} x_{ij} \leq 1, &&  1 \leq j \leq n \\
% \label{eqn: MKP3}
% & x_{ij} \in \{0,1\},  && 1\leq i \leq m,~~~~1 \leq j \leq n
% \end{align}
% where variable $x_{ij} = 1$ if item $j$ is packed into knapsack $i$ and zero otherwise.  

% MKP is $\mathcal{NP}$-hard, but there exist PTASs\footnote{Polynomial-time Approximation Scheme} to solve it efficiently \cite{s4:b4}. 
In this paper, we measure the time complexity of solving an instance of MKP by its numbers of knapsacks and items, $m$ and $n$ respectively, denoted by $T_{MKP}(m,n)$.

\textbf{Multi-Dimensional Knapsack Problem (MDKP)}

MDKP\cite{s4:b4} is another well-known variation of KP. In $d$ dimensional MDKP denoted by $d$-DKP, there are a knapsack of $d$-dimensional positive capacity attributes $(b_1,..,b_i,...,b_d)$ and a set of items $N:=\{1,...j,...,n\}$ each with profit $w_j$ and $d$-dimensional size attributes $(s_{j1},...s_{ji},...,s_{jd})$, where all of $b_i$ and $w_j$ and $s_{ji}$ are non-negative. The goal is to find a subset of the $n$ items of maximum profit which can be packed into the knapsack while not exceeding each of $d$-dimensional capacity attributes. 

%The ILP formulation of MDKP is below:

% \begin{equation}
% \label{eqn: MDKP}
% \begin{aligned}
% &\text{Maximize} \quad \sum^n_{j=1} w_jx_j \textrm{~~~~(ILP-MDKP)},\\
% &s.t. \quad \text{Eqs. (\ref{eqn: MDKP1})-(\ref{eqn: MDKP2})}.
% \end{aligned}
% \end{equation}
% \begin{align}
% \label{eqn: MDKP1}
% & \sum^n_{j=1}s_{ji}x_j \leq b_i, &&  1\leq i \leq d\\
% \label{eqn: MDKP2} 
% & x_j \in \{0,1\}   &&  1\leq j \leq n
% \end{align}
% where variable $x_j = 1$ if item $j$ is packed and zero otherwise.  

% MDKP is $\mathcal{NP}$-hard even for the dimension number $d=2$, $\textit{i.e.}$, $2$-DKP \cite{s4:b4}. For $d$ dimensions MDKP denoted by $d$-DKP, there exists a PTAS with running time $\mathcal{O}(n^{\lceil \frac{d}{\epsilon} \rceil-d})$ \cite{s4:b4}.

In this paper, we measure the time complexity of solving an instance of MDKP by its numbers of dimensions $d$ and items $n$ respectively, denoted by $T_{MDKP}(d,n)$. Next, we assume that the substrate network $G^s(V^s,E^s)$ and a set of path VNRs $\{P^r_1,P^r_2,...,P^r_n\}$ are given as the input of our algorithm. 

\subsection{Path Decomposition Phase}
In this phase, we decompose the substrate network into a set of substrate paths. These decomposed paths are treated as the multiple knapsacks for the optimization of the next phase. Intuitively, if these "knapsacks" are of bigger capacities, \textit{i.e.}, much longer, the embedding optimization in the next phase by the MKP will be better. 
We extract a substrate path $P^s$ by finding the longest path in a Depth-first Search Tree (DST) of $G^s$. We repeat the process by keeping extracting substrate paths until $G^s$ is completely decomposed into a set of $P^s_i$, $1\leq i \leq m$, where $m$ is the number of substrate paths obtained.

\subsection{Embedding by MKP}
After the path decomposition phase, we regard each path VNR $P^r_j$ as an item with size $|P^r_j|$ and profit $w_j$, and treat each substrate path $P^s_i$ as a knapsack with capacity $|P^s_i|$ as shown in Fig. \ref{fig: mkpembed}. Obviously, it is an $m$-knapsacks-$n$-items MKP, where $m \leq |E^s|$.

 \begin{figure}[!htb]
 \centering
 \includegraphics[width=0.5\columnwidth]{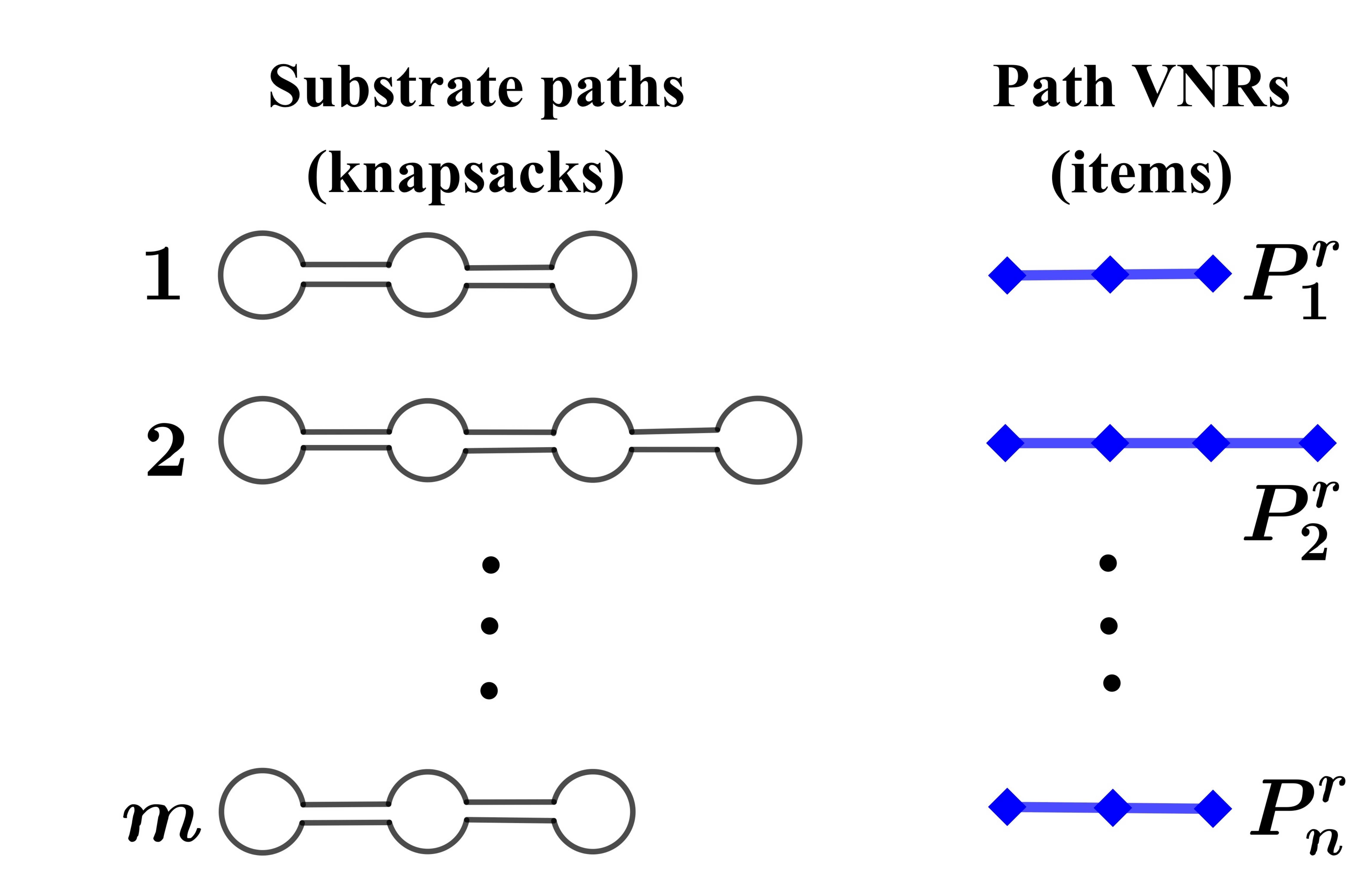}
 \caption{MKP embedding.}
 \label{fig: mkpembed}
 \end{figure}
 \begin{figure}[!htp]
 \centering
 \includegraphics[width=0.5\columnwidth]{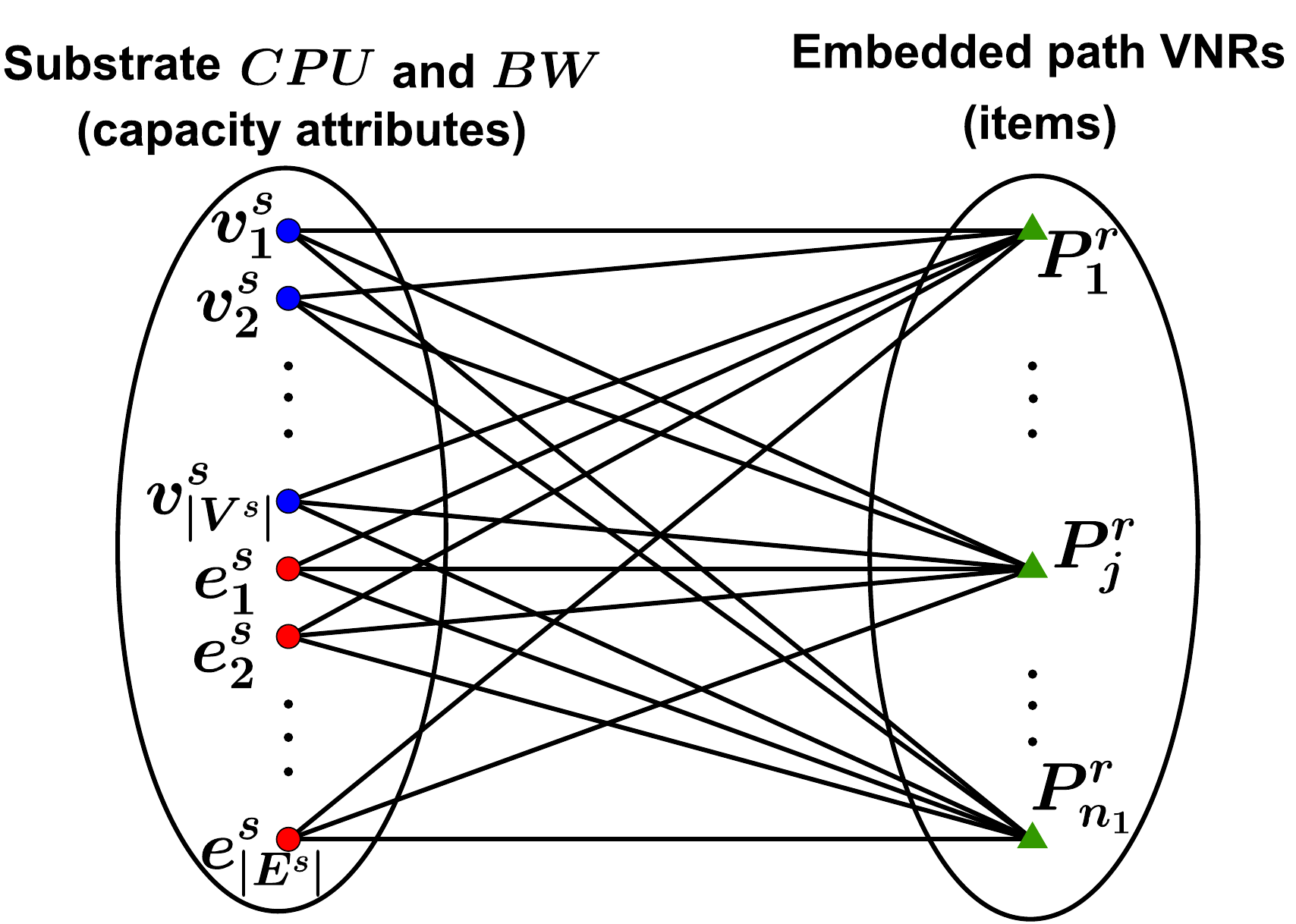}
 \caption{MDKP assignment.}
 \label{fig: mdkpassign}
 \end{figure}

\subsection{Resource Assignment by MDKP}
After embedding some path VNRs by MKP without considering $CPU$s and $BW$s, we need to assign the corresponding demanded $CPU$ and $BW$ resources to as many embedded path VNRs as possible. Here, we treat each embedded $P^r_j$ as an item with $(|V^s|+|E^s|)$-dimensional size attributes: $(s_{jv^s_1},s_{jv^s_2},...,s_{jv^s_{|V^s|}},s_{je^s_1},s_{je^s_2},...,s_{je^s_{|E^s|}})$. For the first $|V^s|$ attributes, if some VNs $v^r$ of the $P^r_j$ are embedded on an SN, say $v^s_k$, then the attribute $s_{jv^s_k}=\sum_{v^r \rightarrow v^s_k}CPU(v^r)$, otherwise 0. For the last $|E^s|$ attributes, if a VL $e^r$ of the $P^r_j$ is embedded on an SL, say $e^s_l$, then the attribute $s_{je^s_l}=BW(e^r)$, otherwise 0. Finally, the array of capacity attributes of the knapsack is that: $(CPU(v^s_1),...,CPU(v^s_{|V^s|}),BW(e^s_1),...,BW(e^s_{|E^s|})$. How to assign resources to these embedded path VNRs to maximize revenue is obviously a $(|V^s|+|E^s|)$-DKP with $n_1$ items as shown in Fig. \ref{fig: mdkpassign}, where $n_1 \leq n$ is the number of embedded path VNRs by MKP.

% For this $|V^s|+|E^s|$-MDKP, if we adopt PTAS then the time complexity is $O(n^{\lceil \frac{|V^s|+|E^s|}{\epsilon} \rceil-(|V^s|+|E^s|)})$. In reality, the substrate network $G^s(V^s,E^s)$ is usually with a big number of $|V^s|$ and $|E^s|$. Thus the PTAS also will consume much time. On the other hand, in practices,  most attributes of these embedded virtual paths equal to zeros, in other words, the constraint matrix $(s_{ij})$ is a sparse matrix. Therefore, we can utilize some specific heuristic algorithms to solve sparse MDKP efficiently.   

\subsection{Final Assembled Algorithm and Time Complexity}

After the resource assignment, we can update the $CPU$ and $BW$ of each SN and SL, resulting in a remained substrate network. We then continue the whole process, from path decomposition to resource assignment, to embed the rest path VNRs until no more paths can be embedded. The final assembled algorithm is shown in Algorithm \ref{pe}.
\begin{algorithm}[!h]
    \caption{Procedure of Path-Embedding (PE)}
    \label{pe}
		\SetKwInOut{Input}{Input}
		\SetKwInOut{Output}{Output}
		\SetKw{KwAnd}{and}
		\SetKw{KwSuch}{s.t.}
		\Input{A substrate network $G^s(V^s,E^s)$ and a set of path VNRs $\{P^r_1,P^r_2,...,P^r_n\}$.}
		\Output{Final revenue.}
		\textbf{set} $Flag \leftarrow \textbf{ture}$;\\
		\While{Flag}
		{ \textbf{run} Path Decomposition Phase on $G^s$;\\ 
		  \textbf{run} Embedding by MKP;\\ 
	      \textbf{run} Resource Assignment by MDKP;\\ 
          \If{no path VNR can be embedded}
          {$Flag \leftarrow  \textbf{false}$;\\}
		  \textbf{update} the substrate network $G^s$;\\
		}		
	\end{algorithm}

The time complexity of extracting one substrate path by constructing a $DST$ is $\mathcal{O}(|V^s|+|E^s|)$ and there are at most $|E^s|$ paths extracted. Hence, the total time complexity is $\mathcal{O}(|E^s|\times (|V^s|+|E^s|))$.
The time complexity of embedding by MKP and resource assignment by MDKP, depending on the algorithms for solving MKP and MDKP, are bounded by $T_{MKP}(|E^s|,n)$ and $T_{MDKP}(|V^s|+|E^s|,n)$ respectively \cite{s4:b4}, where $n$ is the number of path VNRs. Network operators can select algorithms according to their computing capability. (\cite{s4:b4} elaborates most of the current algorithms for MKP and MDKP.)
The time complexity for updating the substrate network is $\mathcal{O}(|V^s|+|E^s|)$, and we repeat the \textbf{while}-loop at most $n$ times. Thus, the time complexity of Procedure PE is $\mathcal{O}\big(n \times (|E^s|^2+T_{MKP}(|E^s|,n)+T_{MDKP}(|V^s|+|E^s|,n)\big)$. 

% Finally, something should be noticed: The three phases of decomposing, embedding and resource assignment should not be isolated. When we construct a DST or Eulerian trail for extracting paths or trails, those SNs and SLs of bigger $CPU$s and $BW$s should be prioritized. When embedding by MKP, the resources on the extracted substrate paths or trails should also be considered. In practice, for the extracted $P^s$, we estimate it average resource by $\frac{\sum_{v^s,e^s \in P^s}(CPU(v^s)+BW(e^s))}{|P^s|}$. The higher the average resource of $P^s$, the prior we select it to embed path VNRs.   
%\circled{2} Although in preliminary model, by Eulerian trail, there are some approximation algorithms in $M2O$ model, suffering from more VNs concentrates on a same SN, which makes the whole topology structure more easily fall apart, $M2O$ model is inferior to $O2O$ model in realistic performance.  

\section{Cycle Embedding}
\label{sec:ce}
%Analogous to path VNRs embedding, for which the substrate networks of path-like structures are more suitable,
%For cycle embedding, a substrate network of cycle-like topology is naturally more suitable just like path embedding. Meanwhile, there are many substrate optical rings \cite{s1:b16}. 

For cycle embedding, a substrate network of cycle-like topology is naturally more suitable just like path embedding. Meanwhile, there are many substrate optical rings \cite{s1:b16}. In this section, we investigate the cycle-to-cycle embedding problem, \textit{i.e.}, $Emb(C^s,\{C^r_1,C^r_2,...,C^r_n\})$ where $C^s$ is the substrate cycle and $\{C^r_1,C^r_2,...,C^r_n\}$ is the set of cycle VNRs ($CPU$s and $BW$s are arbitrary). In this paper, we focus on a natural embedding way called \textit{Simplex Cycle Embedding}, which evenly embeds VNs and VLs on the substrate cycle. 

% \begin{figure}[!htb]
% \centering
%  \begin{subfigure}[Mant-to-One.]{
%         \label{subfig:f3}
% 		\includegraphics[height=0.15\textwidth]{}}
%     \end{subfigure}
%      \begin{subfigure}[Topology Distortion.]{
%         \label{subfig:f4}
% 		\includegraphics[height=0.15\textwidth]{}}
%     \end{subfigure}
% \caption{Various cycle-to-cycle embedding methods.}
% \label{vc2c}
% \end{figure}

% Even in cycle-to-cycle embedding, the embedding methods could be various as shown in Fig. \ref{vc2c}. Figure \ref{subfig:f3} represents the $M2O$ model, where one VL crosses the whole substrate cycle. Obviously it wastes too many $BW$ resources. While in Fig. \ref{subfig:f4}, too many VNs and VLs concentrating on one local part of the substrate cycle makes the whole substrate cycle more easily fall apart and impedes the global optimization of the AcR and Rec problems.

\begin{definition}
\label{def2} \textit{Simplex Cycle Embedding}: We assume the substrate cycle is $C^s(v^s_1,...,v^s_mv^s_1)$. Given a starting SN $v^s_i \in C^s$ with one direction $dir$ (clockwise or anticlockwise denoted by \textbf{"+"} or \textbf{"-"} respectively), we arrange all SNs of $C^s$ in such a sequence, denoted by $Seq(v^s_i,dir)$, that for $dir=\textbf{"+"}$, $[v^s_i, v^s_{i+1}, v^s_{i+2},...,v^s_{i-1}]$, and for $dir=\textbf{"-"}$, $[v^s_i, v^s_{i-1}, v^s_{i-2},...,v^s_{i+1}]$, where all arithmetical operations of subscripts of SNs are modulo $m$. 

Given a cycle VNR $C^r(v^r_1,...,v^r_nv^r_1)$, \textit{Simplex Cycle Embedding} is that: $v^r_j \rightarrow v^s_{i_j}, 1\leq j \leq n$, which can be expressed as $\left( \begin{array}{l}v^r_1~~...~~v^r_{j}~~~v^r_{j+1}~~...~~v^r_n \\v^s_{i_1}~~...~~v^s_{i_j}~~v^s_{i_{j+1}}~~...~~v^s_{i_n}\end{array}\right)$, where the node mapping follows one direction $dir$. More specifically,  $\forall j, v^s_{i_j}$ is ahead of $v^s_{i_{j+1}}$ in $Seq(v^s_{i_1},dir)$. Figure \ref{fig: tsemb} illustrates a triangle VNR embedded on a substrate cycle in clockwise and anticlockwise simplex cycle embeddings, where the numbers beside VLs indicate the demanded $BW$s.

\begin{figure}[!htb]
\centering
\includegraphics[width=0.4\columnwidth]{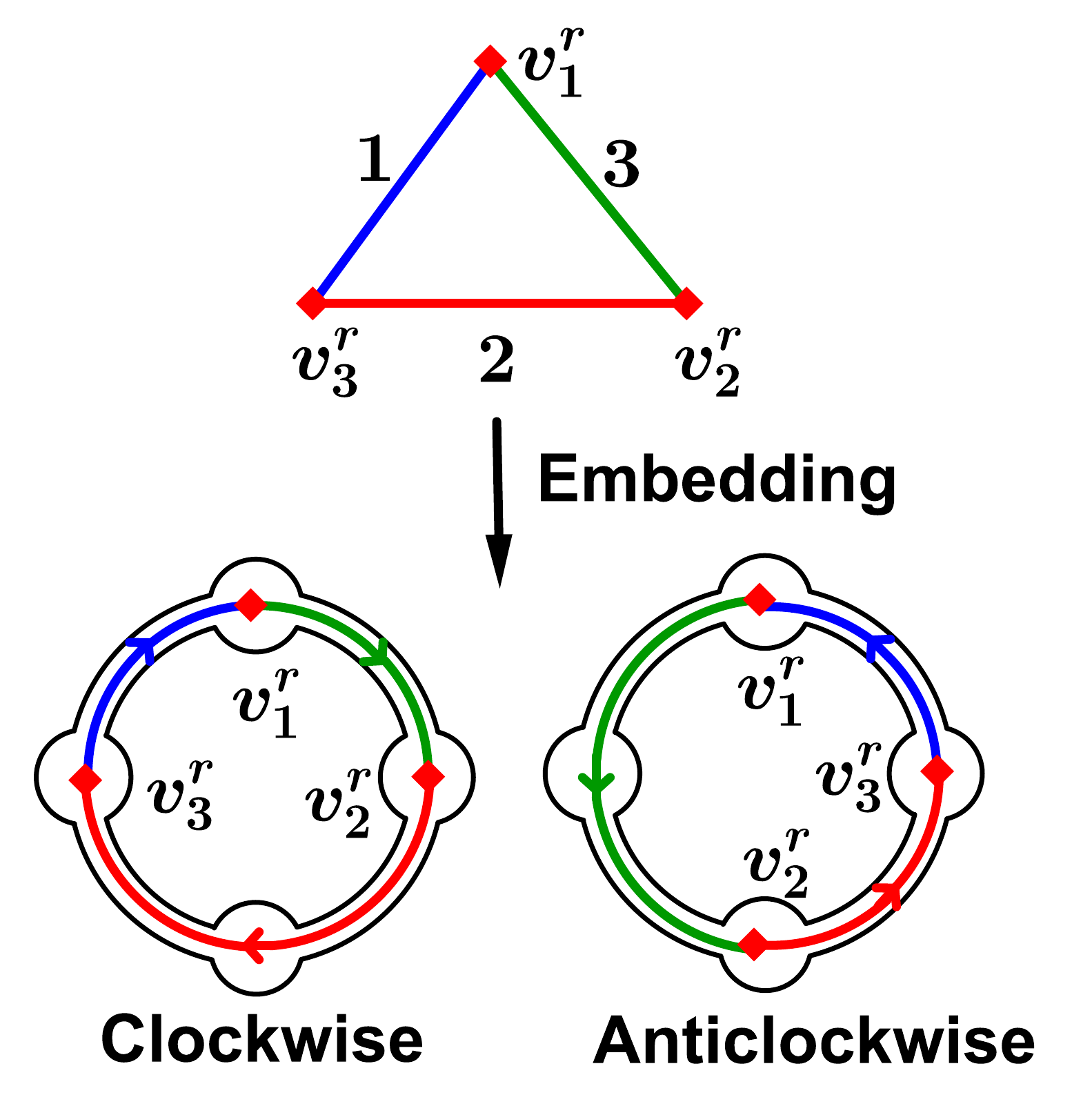}
\caption{Two simplex cycle embeddings: clockwise and anticlockwise.}
\label{fig: tsemb}
\end{figure}
\end{definition}

Similarly, given a $C^s$ and a $C^r$, $Emb(C^s,C^r)$ is the elementary and primary problem in simplex cycle embedding. The similar problem $Emb(G^s,P^r)$ of path embedding is $\mathcal{NP}$-hard even in the uniform setting. Is $Emb(C^s,C^r)$ also $\mathcal{NP}$-hard? Moreover, even the answer of $Emb(C^s,C^r)$ is \textbf{Yes}, different embedding ways could result in different resource consumptions: In Fig. \ref{fig: tsemb}, the $BW$ consumption of the clockwise is $1+2\times 2+3=8$ while that for the anticlockwise is $2\times 3+2+1=9$. Can we efficiently find the least-resource-consuming embedding way? How about the AcR and Rev problems in cycle-to-cycle embedding? 

%Fortunately, this time, we can kill two birds with one stone through constructing a Weighted Directed Auxiliary Graph (WDAG).

Following the three important questions, we unfold this section as follows.
\begin{itemize}
\item First, we construct a Weighted Directed Auxiliary Graph (WDAG) in polynomial-time and prove that each of its directed cycles corresponds to a feasible simplex cycle embedding.
\item Then, the minimum weighted directed cycle corresponds to the least-resource-consuming embedding, which can be obtained by dynamic programming in polynomial time.
\item Finally, we prove that both the AcR and Rev problems are strongly $\mathcal{NP}$-hard, and thus devise effective heuristic algorithms to solve them.
\end{itemize}

Given a substrate cycle $C^s(v^s_1,...,v^s_mv^s_1)$ and a cycle VNR $C^r(v^r_1,...,v^r_nv^r_1)$, for all $v^r_j$, let $\mathcal{F}_{v^r_j}=\{v^s_i \in V^s | CPU(v^s_i) \geq CPU(v^r_j) \}$, $\textit{i.e.}$, the set of feasible SNs on which $v^r_j$ can be embedded, and for all $v^r_jv^r_{j+1}$, $\mathcal{F}_{v^r_jv^r_{j+1}}=\{e^s \in E^s| BW(e^s) \geq BW(v^r_jv^r_{j+1}) \}$, \textit{i.e.}, the set of feasible SLs whose $BW$ is not smaller than $v^r_jv^r_{j+1}$'s. If $Emb(C^s,C^r)$ is \textbf{Yes} in simplex cycle embedding, there must exist an embedding way that $v^r_1 \rightarrow v^s_{i_1} \in \mathcal{F}_{v^r_1}$, following one direction $dir$. With respect to the condition that $v^r_1 \rightarrow v^s_{i_1} \in \mathcal{F}_{v^r_1}$ and the embedding direction $dir$, we construct a WDAG denoted by $\hat{G}^{v^s_{i_1}}_{dir}(\hat{V},\hat{A})$, where $\hat{V}$ is the vertex set and $\hat{A}$ is the arc set.

%$\hat{G}^{v^s_{i_1}}_{dir}(\hat{V},\hat{A})$ is established as follows. 

\textbf{1)} The vertex set $\hat{V}$ comprises of $n$ parts $\{\hat{\mathcal{F}}_{v^r_j}|^n_{j=1}\}$, where the $j$-th part $\hat{\mathcal{F}}_{v^r_j}$ corresponds to the set $\mathcal{F}_{v^r_j}$. Except $\hat{\mathcal{F}}_{v^r_1}$, there is a one-to-one mapping, denoted by $MP$, between vertices in $\hat{\mathcal{F}}_{v^r_j}$ and SNs in $\mathcal{F}_{v^r_j}$. In other words, $\forall \hat{v} \in \hat{\mathcal{F}}_{v^r_j}$ one-to-one corresponds to $MP(\hat{v}) \in \mathcal{F}_{v^r_j}$. In $\hat{\mathcal{F}}_{v^r_1}$, there is only one vertex, $\hat{v}_1$, which corresponds to $v^s_{i_1}$, \textit{i.e.}, $MP(\hat{v}_1)=v^s_{i_1}$ as shown in Fig. \ref{fig: indep}.
% Following the virtual nodes order $[\hat{v}^v_1, \hat{v}^v_2,...,\hat{v}^v_n]$, we hierarchically arrange these $n$ parts vertice as shown in Fig. \ref{fig: indep}, starting from Level $1$ to Level $n$

\textbf{2)} The arc set $\hat{A}$ is iteratively constructed as below: First starting at $\hat{v}_1$, for each such vertex in $\hat{\mathcal{F}}_{v^r_2}$, say $\hat{v}_2$, that satisfies two criteria with $\hat{v}_1$, an arc is constructed with $\hat{v}_1$ as tail and $\hat{v}_2$ as head. The two criteria are as follows. \textbf{Criterion 1:} $MP(\hat{v}_1)$ is ahead of $MP(\hat{v}_2)$ in $Seq(v^s_{i_1},dir)$. \textbf{Criterion 2:} each SL $e^s$, lying in the segment from $MP(\hat{v}_1)$ to $MP(\hat{v}_2)$ following $dir$, belongs to $\mathcal{F}_{v^r_1v^r_2}$. Besides, a weight is assigned to this arc which equals $|MP(\hat{v}_1)-MP(\hat{v}_2)|\times BW(v^r_1v^r_2)$, where $|MP(\hat{v}_1)-MP(\hat{v}_2)|$ is the number of SLs in the segment from $MP(\hat{v}_1)$ to $MP(\hat{v}_2)$ following $dir$. Next, for each such vertex in $\hat{\mathcal{F}}_{v^r_2}$ with incoming edges, say $\hat{v}_2$ and $d^-(\hat{v}_2) > 0$,  we repeat the same procedure on it as we did for $\hat{v}_1$: Searching those vertices in $\hat{\mathcal{F}}_{v^r_3}$ which satisfy the two criteria with $\hat{v}_2$; and arcs are constructed with $\hat{v}_2$ as tail; and weights are computed and assigned to these arcs. After some iterations, at certain vertex part, say $\hat{\mathcal{F}}_{v^r_j}, j<n$, if there is no vertex in $\hat{\mathcal{F}}_{v^r_j}$ whose indegree is greater than 0, then the whole process is terminated. Otherwise, we reach the vertex part $\hat{\mathcal{F}}_{v^r_n}$.  For each such vertex in $\hat{\mathcal{F}}_{v^r_n}$ with non-zero indegree, say $\hat{v}_n$ and $d^-(\hat{v}_n) > 0$, an arc with the corresponding weight is constructed with $\hat{v}_n$ as tail and $\hat{v}_1$ as head, if $\hat{v}_n$ satisfies \textbf{Criterion 2} with $\hat{v}_1$, \textit{i.e.}, each SL $e^s$ lying in the segment from $MP(\hat{v}_n)$ to $MP(\hat{v}_1)$ following $dir$, belongs to $\mathcal{F}_{v^r_nv^r_1}$. Figure \ref{fig: indep} shows a complete WDAG $\hat{G}^{v^s_{i_1}}_{dir}$.

\begin{figure}[!htb]
 \centering
 \includegraphics[width=0.6\columnwidth]{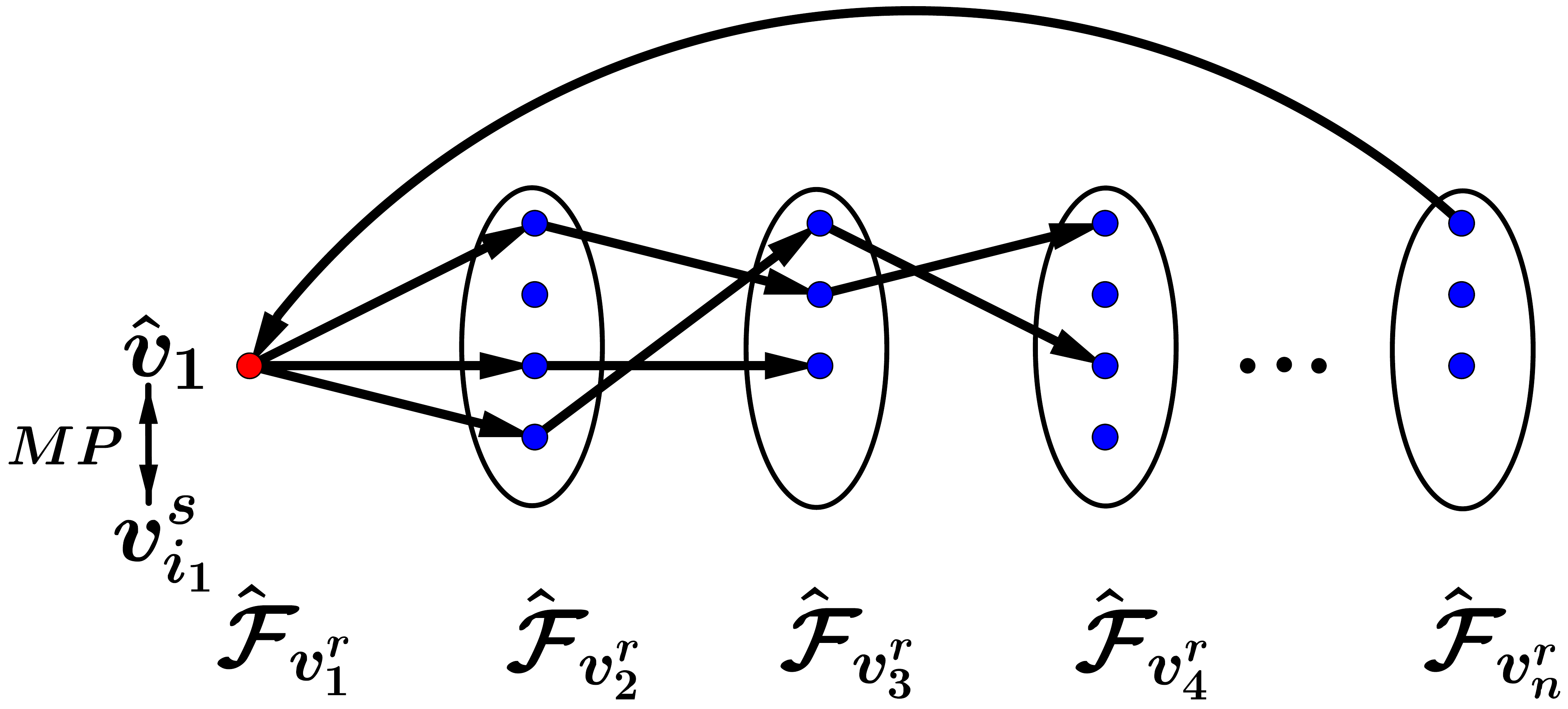}
 \caption{The WDAG $\hat{G}^{v^s_{i_1}}_{dir}$ with respect to $v^s_{i_1}$ and $dir$.}
 \label{fig: indep}
 \end{figure}

\begin{theorem}
\label{the: timecofa2}
The time complexity of constructing a WDAG is $\mathcal{O}(m^3n)$, where $m$ and $n$ are the SN and the VN numbers of the substrate cycle $C^s$ and the cycle VNR $C^r$ respectively.
\end{theorem}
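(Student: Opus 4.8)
The plan is to account separately for the cost of instantiating the vertex set $\hat{V}$ and the cost of building the arc set $\hat{A}$, and to show that the latter dominates. First I would bound the vertices: the graph has $n$ parts $\hat{\mathcal{F}}_{v^r_1},\ldots,\hat{\mathcal{F}}_{v^r_n}$, and since each part corresponds to a subset of the $m$ SNs on $C^s$, each contains at most $m$ vertices. Determining the feasible set $\mathcal{F}_{v^r_j}$ for a single VN costs $\mathcal{O}(m)$ (one scan comparing $CPU(v^s_i)$ against $CPU(v^r_j)$), so building all $n$ parts costs $\mathcal{O}(mn)$, which is subsumed by the bound we seek.

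Next I would count the candidate arcs layer by layer. Arcs only run between consecutive parts $\hat{\mathcal{F}}_{v^r_j}$ and $\hat{\mathcal{F}}_{v^r_{j+1}}$, together with the wrap-around from $\hat{\mathcal{F}}_{v^r_n}$ back to $\hat{v}_1$, giving $n$ layers in total. Within one layer the number of candidate (tail, head) pairs is at most $|\hat{\mathcal{F}}_{v^r_j}|\times|\hat{\mathcal{F}}_{v^r_{j+1}}|\le m^2$. Thus the construction examines $\mathcal{O}(m^2 n)$ candidate arcs overall, and the remaining task is to bound the work done per candidate.

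The heart of the argument, and the step that furnishes the extra factor of $m$, is the cost of deciding, for a single candidate pair $(\hat{v}_j,\hat{v}_{j+1})$, whether the two criteria hold and, if so, computing the arc weight. Criterion 1, namely whether $MP(\hat{v}_j)$ precedes $MP(\hat{v}_{j+1})$ in $Seq(v^s_{i_1},dir)$, reduces to comparing two position indices in $\mathcal{O}(1)$ after an $\mathcal{O}(m)$ preprocessing pass that labels each SN by its rank in the sequence. Criterion 2, however, requires verifying that \emph{every} SL in the directed segment from $MP(\hat{v}_j)$ to $MP(\hat{v}_{j+1})$ lies in $\mathcal{F}_{v^r_jv^r_{j+1}}$; since that segment may contain up to $m$ SLs, this check, performed simultaneously with the count yielding the weight $|MP(\hat{v}_j)-MP(\hat{v}_{j+1})|\times BW(v^r_jv^r_{j+1})$, costs $\mathcal{O}(m)$. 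Hence each candidate arc is processed in $\mathcal{O}(m)$ time.

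Combining the three bounds gives $\mathcal{O}(m^2 n)\times\mathcal{O}(m)=\mathcal{O}(m^3 n)$ for the arc construction, which dominates the $\mathcal{O}(mn)$ vertex cost, so the total is $\mathcal{O}(m^3 n)$ as claimed. The one place warranting care is Criterion 2: I would confirm that no sub-linear segment test is being tacitly assumed, so that the $\mathcal{O}(m)$ per-arc cost is a genuine bottleneck rather than a loose overcount. A direct segment-scan argument settles this, and because the segment length is intrinsically bounded by $m$ this linear factor cannot be avoided by the naive construction, which is exactly why the bound carries three powers of $m$.
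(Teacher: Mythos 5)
Your proposal is correct and follows essentially the same route as the paper's own proof: split the cost into building the vertex parts ($\mathcal{O}(mn)$, since each $|\hat{\mathcal{F}}_{v^r_j}|\le m$) and constructing the arcs, bound the candidate pairs between consecutive parts by $m^2 n$, and charge $\mathcal{O}(m)$ per pair for checking the two criteria, giving $\mathcal{O}(m^3 n)$ in total. Your added refinements --- the $\mathcal{O}(1)$ rank-based test for Criterion~1 and folding the weight computation into the segment scan for Criterion~2 --- merely sharpen details the paper leaves implicit, without changing the argument.
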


\begin{proof}
The time complexity of constructing a WDAG consists of two parts: 1) establishing the vertex set $\hat{V}$; 2) constructing the weight arc set $\hat{A}$. For the first part, we need to set up the $n$ vertex parts $\hat{\mathcal{F}}_{v^r_j}, \forall 1 \leq j \leq n$, and build a one-to-one mapping $MP$ between vertices in $\hat{\mathcal{F}}_{v^r_j}$ and $\mathcal{F}_{v^r_j}$. Since each $|\hat{\mathcal{F}}_{v^r_j}| \leq m$, time consumption of this part is up to $\mathcal{O}(mn)$. For the second part, in the worst case, for each vertex pair $(\hat{v}_j, \hat{v}_{j+1})$ between $\hat{\mathcal{F}}_{v^r_j}$ and $\hat{\mathcal{F}}_{v^r_{j+1}}$, we need to check \textbf{Criterion 1} and \textbf{Criterion 2} to decide whether to construct an arc. The time consumption of checking \textbf{Criterion 1} and \textbf{Criterion 2} is $\mathcal{O}(m)$. There are at most $m^2n$ arcs, and thus the time consumption of constructing the arc set $\hat{A}$ is $\mathcal{O}(m^3n)$. Combing $\mathcal{O}(mn)$ and $\mathcal{O}(m^3n)$, the total time complexity is $\mathcal{O}(m^3n)$. 
\end{proof}

The WDAG $\hat{G}^{v^s_{i_1}}_{dir}$ has an important property described in Theorem \ref{the: deoto}.

\begin{theorem}
\label{the: deoto}
Given a substrate cycle $C^s(v^s_1,...,v^s_mv^s_1)$ and a cycle VNR $C^r(v^r_1,...,v^r_nv^r_1)$, under the condition that $v^r_1 \rightarrow v^s_{i_1} \in \mathcal{F}_{v^r_1}$ and following $dir$, there is a one-to-one relation between each directed cycle in the WDAG $\hat{G}^{v^s_{i_1}}_{dir}$ and each feasible simplex cycle embedding way.
\end{theorem}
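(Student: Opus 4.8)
The plan is to exhibit an explicit bijection $\Phi$ between the directed cycles of $\hat{G}^{v^s_{i_1}}_{dir}$ and the feasible simplex cycle embeddings with $v^r_1 \rightarrow v^s_{i_1}$ following $dir$, and to check that $\Phi$ and its inverse are both well defined and feasibility-preserving.

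First I would pin down the combinatorial shape of a directed cycle in the WDAG. By construction every arc runs from $\hat{\mathcal{F}}_{v^r_j}$ to $\hat{\mathcal{F}}_{v^r_{j+1}}$ for some $1 \le j \le n-1$, or from $\hat{\mathcal{F}}_{v^r_n}$ back into $\hat{\mathcal{F}}_{v^r_1}$, and $\hat{\mathcal{F}}_{v^r_1}$ contains the single vertex $\hat{v}_1$. Hence a directed cycle can neither stay within a layer nor skip one: it must traverse the layers in the cyclic order $1 \to 2 \to \cdots \to n \to 1$, visiting exactly one vertex $\hat{v}_j \in \hat{\mathcal{F}}_{v^r_j}$ per layer and passing through $\hat{v}_1$ exactly once. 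I would record this as a preliminary observation, since it guarantees every directed cycle has the canonical form $\hat{v}_1 \hat{v}_2 \cdots \hat{v}_n \hat{v}_1$.

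Given such a cycle I define $\Phi$ by $v^r_j \rightarrow MP(\hat{v}_j)$, $1 \le j \le n$, and verify that $\Phi$ lands in the feasible simplex embeddings. Node mapping feasibility is immediate from $\hat{v}_j \in \hat{\mathcal{F}}_{v^r_j}$, which forces $CPU(MP(\hat{v}_j)) \ge CPU(v^r_j)$. The ordering and direction requirements of Definition \ref{def2} follow from \textbf{Criterion 1} on the $n-1$ forward arcs, which makes the positions of $MP(\hat{v}_1), \dots, MP(\hat{v}_n)$ strictly increasing in $Seq(v^s_{i_1},dir)$; this strict monotonicity simultaneously yields that the chosen SNs are pairwise distinct, so no two VNs collide on one SN. The key point for link mapping is that, because the positions strictly increase and the closing arc returns to $\hat{v}_1$, the $n$ substrate segments carrying the VLs are edge-disjoint and partition $E^s$ of $C^s$; consequently each SL carries at most one VL, so the aggregate per-SL bandwidth constraint of Definition \ref{de1} collapses to the per-VL condition, which is exactly \textbf{Criterion 2} imposed on every arc, including the closing arc $\hat{v}_n \to \hat{v}_1$ for the VL $v^r_nv^r_1$.

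Conversely I would define $\Psi$ sending a feasible simplex embedding $v^r_j \rightarrow v^s_{i_j}$ (with $v^r_1 \rightarrow v^s_{i_1}$, direction $dir$) to the vertex sequence $\hat{v}_j := MP^{-1}(v^s_{i_j})$, and argue that each consecutive pair satisfies both criteria, \textbf{Criterion 1} from the simplex ordering and \textbf{Criterion 2} from the feasibility of the embedded path of each VL, so that all required arcs exist and $\hat{v}_1 \hat{v}_2 \cdots \hat{v}_n \hat{v}_1$ is a genuine directed cycle. Since $MP$ is a bijection on each layer, $\Phi$ and $\Psi$ are manifestly mutual inverses, which gives the claimed one-to-one relation. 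I expect the main obstacle to be the link-mapping equivalence inside $\Phi$: one must argue cleanly that a simplex embedding routes consecutive VLs along consecutive, non-overlapping arcs of $C^s$, so that checking bandwidth segment-by-segment (Criterion 2) is equivalent to the aggregate per-SL constraint; the closing segment from $MP(\hat{v}_n)$ back to $v^s_{i_1}$, which wraps past the start and is treated asymmetrically in the construction (no Criterion 1), is the delicate case to handle carefully.
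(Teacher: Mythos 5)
Your proposal is correct and follows essentially the same route as the paper's proof: both directions are handled via the layer-wise bijection $MP$/$MP^{-1}$, checking \textbf{Criterion 1} and \textbf{Criterion 2} on consecutive pairs. You actually fill in two points the paper leaves implicit --- the argument that every directed cycle must have the canonical form $\hat{v}_1\hat{v}_2\cdots\hat{v}_n\hat{v}_1$ (one vertex per layer, passing through $\hat{v}_1$ exactly once), and the observation that the $dir$-segments wrap around $C^s$ exactly once and are edge-disjoint, so the aggregate per-SL bandwidth constraint collapses to the per-VL check of \textbf{Criterion 2} --- which makes your write-up a more complete version of the same argument.
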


\begin{proof}
For each feasible simplex cycle embedding way under the condition that $v^r_1 \rightarrow v^s_{i_1}$ and following $dir$, \textit{w.l.o.g.}, we assume it is $\left( \begin{array}{l}v^r_1~~~v^r_{2}~~...~~v^r_{j}~~~v^r_{j+1}~~...~~v^r_n \\v^s_{i_1}~~v^s_{i_2}~~...~~v^s_{i_j}~~v^s_{i_j+1}~~...~~v^s_{i_n}\end{array}\right)$, where each $v^r_j \rightarrow v^s_{i_j}$ and $v^s_{i_j} \in \mathcal{F}_{v^r_j}$. Since $MP$ is a one-to-one mapping from $\hat{\mathcal{F}_{v^r_j}}$ to $\mathcal{F}_{v^r_j}$, we use $MP^{-1}$ to represent the inverse, \textit{i.e.}, $MP^{-1}(v^s_{i_j}) \in \hat{\mathcal{F}_{v^r_j}}, MP(MP^{-1}(v^s_{i_j}))=v^s_{i_j}$. As it is feasible,  $MP^{-1}(v^s_{i_{j+1}})$ must satisfy \textbf{Criterion 1} and \textbf{Criterion 2} with $MP^{-1}(v^s_{i_j})$. Therefore, according to the construction process, there is an arc with $MP^{-1}(v^s_{i_j})$ as tail and $MP^{-1}(v^s_{i_{j+1}})$ as head, and {\footnotesize$\left(MP^{-1}(v^s_{i_1}),...,MP^{-1}(v^s_{i_j}),...,MP^{-1}(v^s_{i_n})MP^{-1}(v^s_{i_1})\right)$} forms a directed cycle in $\hat{G}^{v^s_{i_1}}_{dir}$.

For each directed cycle in $\hat{G}^{v^s_{i_1}}_{dir}$, say $\hat{C}$, according to the construction process, $\hat{C}$ must pass through exactly one vertex in each vertex part $\hat{F}_{v^r_j}, \forall 1 \leq j \leq n$, say $\left(\hat{v}_1,...,\hat{v}_j,...,\hat{v}_n\right)$. Since $\hat{v}_{j+1}$ satisfies \textbf{Criterion 1} and \textbf{Criterion 2} with $\hat{v}_{j}$, it is obvious that 

$MP(\hat{C})=\left( \begin{array}{l}v^r_1~~~........~~~~v^r_{j}~~~~~.......~~v^r_n \\ MP(\hat{v}_1)...MP(\hat{v}_j)...MP(\hat{v}_n)\end{array}\right)$ must be a feasible simplex cycle embedding way. 
\end{proof}

Subsequently, combining with the weights assigned to arcs of the WDAG $\hat{G}^{v^s_{i_1}}_{dir}$, the sum of arc weights of a directed cycle in $\hat{G}^{v^s_{i_1}}_{dir}$ is equal to the total $BW$ consumption of the embedding way which corresponds to the directed cycle. Since the $CPU$ consumption is fixed to the sum of all demanded $CPUs$ of VNs, to obtain the least-resource-consuming embedding way, we just need to search the minimum weighted directed cycle in the WDAG $\hat{G}^{v^s_{i_1}}_{dir}$ which can be solved by dynamic programming in polynomial time.

To obtain the optimal least-resource-consuming simplex cycle embedding, we just need to construct $2\times |\mathcal{F}_{v^r_1}|$ WDAGs (two directions), search the directed cycle with the minimum weigth in each WDAG, and output the smallest one among them. We formally give Algorithm \ref{al: C2CE}.
\begin{algorithm}[!h]
    \caption{Procedure of Cycle-to-Cycle Embedding (C2CE)}
    \label{al: C2CE}
		\SetKwInOut{Input}{Input}
		\SetKwInOut{Output}{Output}
		\SetKw{KwAnd}{and}
		\SetKw{KwSuch}{s.t.}
		\Input{$C^s(v^s_1,...,v^s_mv^s_1)$ and $C^r(v^r_1,...,v^r_nv^r_1)$}
		\Output{The least-resource-consuming embedding way.}
		\textbf{set}  $\mathcal{F}_{v^r_j} \leftarrow\{v^s_i| CPU(v^s_i) \geq CPU(v^r_j) \}, \forall j$;\\
        \textbf{set}  $\mathcal{F}_{v^r_jv^r_{j+1}} \leftarrow\{v^s_iv^s_{i+1}| BW(v^s_iv^s_{i+1}) \geq BW(v^r_jv^r_{j+1})\}$;\\
         \textbf{set} $EMB \leftarrow \emptyset$, $Cost \leftarrow \infty$;\\
		\For{$v^r_{i_1} \in \mathcal{F}_{v^r_1}$}
		{ \For{$dir$ (\textbf{"+"} or \textbf{"-"})}
           { \textbf{construct} the WDAG;\\
             \textbf{search} the minimum weighted directed cycle $\hat{C}$ in the                       WDAG;\\
             \If{$w(\hat{C}) < Cost$}
              {$EMB \leftarrow MP(\hat{C})$, $Cost  \leftarrow w(\hat{C})$;\\}
           }
     }	
        \textbf{return} $EMB$;
	\end{algorithm}
 
In \textit{Lines 1-2}, for each VN $v^r_j$ and VL $v^r_jv^r_{j+1}$, we set up the feasible SN sets and SL sets. The time complexity of \textit{Lines 1-2} is $\mathcal{O}(mn)$. At \textit{Lines 3}, we set a variable $EMB$ to record the optimal embedding way whose initial value is $\emptyset$ and another variable $Cost$ to record the $BW$ consumption of the $EMB$ whose initial value is large enough denoted by $\infty$. In \textit{Lines 4-9}, for each $v^s_{i_1} \in \mathcal{F}_{v^r_1}$ and each direction $dir$ (\textbf{"+"} or \textbf{"-"}), we construct the corresponding WDAG at \textit{Line 6} whose time complexity is $\mathcal{O}(m^3n)$ by Theorem \ref{the: timecofa2}. At \textit{Line 7}, we search the minimum weighted directed cycle $\hat{C}$, and its time complexity is $\mathcal{O}(m^2n)$ by dynamic programing. At \textit{Lines 8-9}, if the weight of $\hat{C}$ denoted by $w(\hat{C})$ is less than current $Cost$, we replace $Cost$ by $w(\hat{C})$ and $EMB$ by the embedding way denoted by $MP(\hat{C})$ which corresponds to $\hat{C}$. Finally, at \textit{Line 10}, we return the $EMB$ (if $EMB=\emptyset$ then $C^r$ can not be embedded on $C^s$ by simplex cycle embedding). The total time complexity of Algorithm \ref{al: C2CE} is $\mathcal{O}(mn)+2m\times \left(\mathcal{O}(m^3n)+\mathcal{O}(m^2n)\right)+\mathcal{O}(1)=\mathcal{O}(m^4n)$, where $m$ and $n$ are the SN and the VN numbers of the substrate cycle $C^s$ and the cycle VNR $C^r$ respectively.

% Notice that one reason leading to the time complexity of $\mathcal{O}(m^4n)$ is that we repeatedly construct $2\times |\mathcal{F}_{v^r_1}|$ WDAGs. Therefore, in practice, selecting the VN $v^r_j$ with smaller $|\mathcal{F}_{v^r_j}|$ as the role of $v^r_1$ can reduce the total complexity. 

For the $Emb(C^s,C^r)$ problem in simplex cycle embedding, we can solve it in polynomial time. How about the AcR and Rev problem? Unfortunately, both of them are still $\mathcal{NP}$-hard.

\begin{theorem}
\label{the: snpc2c}
 In cycle-to-cycle embedding, both the AcR and Rev problems are $\mathcal{NP}$-hard. Moreover, the hardnesses of each problem is no less than any $d$-DKP, where $d$ is any constant integer.
\end{theorem}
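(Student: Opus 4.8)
The plan is to establish the claim by a polynomial-time reduction from $d$-DKP to cycle-to-cycle embedding, together with the observation (already recorded in Section~\ref{subsec:pf}) that AcR is exactly the special case of Rev obtained by setting every revenue $w_i=1$. Because of this, Rev is at least as hard as AcR, so it suffices to exhibit, for every fixed $d$, a reduction carrying an arbitrary $d$-DKP instance into a Rev instance of simplex cycle embedding; the unit-profit version of the same construction then handles AcR. Throughout I use $N$ for the number of items/VNRs to avoid clashing with the per-VNR vertex count.

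Given a $d$-DKP instance with capacities $(b_1,\dots,b_d)$ and items $j=1,\dots,N$ of size $(s_{j1},\dots,s_{jd})$, profit $w_j$, and (WLOG, after discarding items that do not fit alone) $s_{jk}\le b_k$, I would build a substrate cycle $C^s(v^s_1,\dots,v^s_{d+2}v^s_1)$ with exactly $d+2$ SNs: two \emph{anchor} SNs $v^s_1,v^s_2$ with very large CPU capacities, and $d$ \emph{dimension} SNs $v^s_{2+k}$ with $CPU(v^s_{2+k})=b_k$; every SL is given a non-binding (very large) $BW$. Each item $j$ becomes a cycle VNR $C^r_j$ on $d+2$ VNs whose two anchor VNs carry matching very large CPU demands, whose $k$-th remaining VN demands $s_{jk}$, whose VLs demand a negligible $BW$, and whose revenue is $w_j$. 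The design goal is that every $C^r_j$ admit the \emph{unique} feasible simplex embedding $v^r_\ell\rightarrow v^s_\ell$.

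The heart of the argument, and the step I expect to be the main obstacle, is forcing this unique embedding while keeping the anchor resources non-binding. I would scale the anchor capacities and demands so that simultaneously: (i) an anchor VN fits only on anchor SNs, never on a dimension SN, because its demand exceeds every $b_k$; (ii) the two anchor capacities are separated enough that the first anchor VN fits only on $v^s_1$; and (iii) each anchor capacity still exceeds $N$ times the corresponding anchor demand, so that all selected VNRs may legally share the anchors. By the node-mapping rule forbidding two VNs of one VNR on the same SN, fixing $v^r_1\rightarrow v^s_1$ then forces $v^r_2\rightarrow v^s_2$; since $v^s_1,v^s_2$ are adjacent, the monotonicity in Definition~\ref{def2} pins both the starting SN and the direction, so $v^r_\ell\rightarrow v^s_\ell$ for every $\ell$. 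This is precisely the one-to-one picture of Theorem~\ref{the: deoto} in which the WDAG admits a single admissible directed cycle. The delicate bookkeeping is checking that the inequalities behind (i)--(iii) are mutually consistent (large anchor demands versus $N$-fold non-binding capacities versus small $b_k$), which I would discharge by choosing, e.g., the two anchor demands to dominate $\max_k b_k$ and the anchor capacities to equal $N$ times those demands.

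Once every VNR's embedding is forced, the reduction closes at once: a subset $S$ is simultaneously embeddable iff at each dimension SN one has $\sum_{j\in S}s_{jk}\le b_k$ (the anchors and all SLs being non-binding by construction), which is exactly feasibility of $S$ in the $d$-DKP, while the embedded revenue equals $\sum_{j\in S}w_j$. Hence an optimal Rev embedding yields an optimal $d$-DKP solution, so Rev is no less hard than $d$-DKP for every constant $d$, and $\mathcal{NP}$-hardness follows since $d$-DKP is $\mathcal{NP}$-hard. Taking $w_j\equiv 1$ turns the identical construction into a reduction from the cardinality (unit-profit) $d$-DKP to AcR, which is $\mathcal{NP}$-hard for $d\ge 2$; combined with AcR being a restriction of Rev, this delivers the stated lower bound for both problems. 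I would finish with two remarks: if one insists on reducing the profit-weighted $d$-DKP directly to the unweighted AcR, the (polynomially bounded, integer) profits can be absorbed into a constant number of extra binding dimensions by a standard range gadget; and, more importantly, letting the number of dimension SNs grow with $|V^s|$ rather than fixing it makes the number of binding constraints part of the input, so the same reduction from general multidimensional knapsack upgrades the result to strong $\mathcal{NP}$-hardness.
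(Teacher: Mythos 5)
Your proposal is correct and shares the paper's overall strategy---encode each of the $d$ knapsack dimensions as the CPU capacity of one SN on a substrate cycle, turn each item into a cycle VNR with matching demands, keep all bandwidths non-binding, and force the simplex embedding so that simultaneous embeddability of a subset is exactly $d$-DKP feasibility, closing with AcR being the unit-revenue special case of Rev---but your forcing gadget is genuinely different. The paper uses a cycle of exactly $d$ SNs and scales the $i$-th dimension's capacities and demands by rapidly growing factors $B_i$ satisfying $B_i \times s_{ji} > \max_{1<k<i}(B_k\times b_k,\, b_1)$, so that $v^r_{ji}$ fits only on $v^s_i$; combined with the fact that a simplex embedding of $d$ VNs onto $d$ SNs must be an order-preserving bijection (hence a rotation), this pins the identity embedding. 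You instead add two anchor SN/VN pairs (so $d+2$ nodes) whose demands dominate every $b_k$ and whose capacities are separated so that $v^r_1$ fits only on $v^s_1$ while the anchor demand $a_2 > \max_k b_k$ kills the direction ``$-$'' rotation, after which $v^r_\ell \rightarrow v^s_\ell$ is forced \emph{independently of the item sizes}. That independence buys you two things the paper's gadget does not: it is robust to zero coordinates $s_{ji}=0$ (for which the paper's inequality $B_i\times s_{ji} > \cdots$ degenerates and that VN becomes unconstrained), and your numbers stay polynomially bounded even when $d$ grows with the input, which is what legitimately supports your closing remark upgrading to strong $\mathcal{NP}$-hardness---the paper's cascading $B_i$ blow up for non-constant $d$, and the paper instead infers strong hardness only by citing that $2$-DKP is already strongly $\mathcal{NP}$-hard. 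Conversely, the paper's construction is more economical ($d$ SNs rather than $d+2$, reducing cardinality $d$-DKP directly to AcR). One step to tighten in your write-up: deriving $v^r_2 \rightarrow v^s_2$ is not really a consequence of the no-two-VNs-on-one-SN rule or of adjacency; the load-bearing fact is that with $n=m=d+2$ the monotonicity of Definition \ref{def2} forces consecutive positions in $Seq(v^s_{i_1},dir)$ (the rotation structure underlying Theorem \ref{the: deoto}), together with $a_2>\max_k b_k$ to exclude $dir=$``$-$''; you gesture at both, but the bijection-counting argument should be made explicit.
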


\begin{proof}
Since the AcR problem is a special case of the Rev problem, we just need to prove the AcR problem is $\mathcal{NP}$-hard. To this end, we polynomial-timely reduce the $\mathcal{NP}$-hard problem "Cardinality $d$-DKP" \cite{s4:b4} to the AcR problem. Cardinality $d$-DKP is a special $d$ dimensional MDKP, \textit{i.e.}, the knapsack is with a $d$-dimensional capacity attributes ($b_1$,...,$b_i$,...,$b_d$)
%whose ILP model as follows: 
% \begin{equation}
% \label{eqn: MDKP}
% \begin{aligned}
% &\text{Maximize} \quad \sum^n_{j=1} x_j \textrm{~~~~(ILP-3DKP)},\\
% &s.t. \quad \text{Eqs. (\ref{eqn: 3DKP1})-(\ref{eqn: 3DKP2})}.
% \end{aligned}
% \end{equation}
% \begin{align}
% \label{eqn: 3DKP1}
% & \sum^n_{j=1}s_{ji}x_j \leq b_i, &&  1\leq i \leq 3\\
% \label{eqn: 3DKP2} 
% & x_j \in \{0,1\}   &&  1\leq j \leq n
% \end{align}
and each $j$-th item is with a $d$-dimensional size attributes $(s_{j1},...,s_{ji},...,s_{jd})$. The objective is to maximize the number of packed items.

Given an instance in Cardinality $d$-DKP, we construct the substrate cycle in such way: There are $d$ SNs in the substrate cycle $C^s(v^s_1,...,v^s_i,...,v^s_d)$, and $CPU(v^s_1)=b_1$ and $CPU(v^s_i)=B_i\times b_i, \forall 1<i\leq d$ where $B_i$ are relatively big numbers explained later. Each SL is with $BW=n$, \textit{i.e.}, the number of items. We construct $n$ cycle VNRs in such way: For the $j$-th cycle VNR $C^r(v^r_{j1},...,v^r_{ji},...,v^r_{jd})$, there are $d$ VNs, and $CPU(v^r_{j1})=s_{j1}$ and $CPU(v^r_{ji})=B_i\times s_{ji}, \forall 1<i\leq d$ such that $B_i\times s_{ji} > \max_{1<k<i}(B_k\times b_k, b_1)$ (by setting $CPUs$ like this the $v^r_{ji}$ can only be embedded on $v^s_i$). Each VL is with $BW=1$. Thus, the solution of the instance of Cardinality $d$-DKP is equivalent to that of the AcR problem.
\end{proof}
% \begin{figure}[!htb]
% \centering
% \subfigure[Substrate cycle]{
% 		\label{subfig: 3dkps}
% 		\includegraphics[width=0.4\columnwidth]{}}
% \subfigure[Cycle VNRs]{
% \label{subfig: 3dkpv}
% \includegraphics[width=0.3\columnwidth]{}}
% \caption{The polynomial Reduction of Cardinality $3$-DKP to the AcR.}
% \label{fig: conflict3}
% \end{figure}

% According to Theorem \ref{the: deoto} that each directed cycle in the WDAG corresponds to a feasible embedding way for one cycle VNR, given $n$ cycle VNRs, assuming for each $j$-th VNR, there are $N_j$ directed cycles, then there exist $\prod^n_{j=1}N_j$ combinatorial embedding ways, each of which is $\mathcal{NP}$-hard as proven in Theorem \ref{the: snpc2c}. To achieve the maximum revenue to traverse all $\prod^n_{j=1}N_j$ embedding ways is decidedly impossible in reality. 
As shown in \cite{s4:b4}, even the $2$-DKP is strongly $\mathcal{NP}$-hard and the hardness of solving $d$-DKP keeps entrenched with the increase of $d$. To effectively solve the strongly $\mathcal{NP}$-hard problem, we herein develop a heuristic algorithm based on the optimization for single cycle embedding as follows. Intuitively, for a cycle $C^r$, if its ratio of revenue to resource consumption is higher than the others, it tends to be embedded so as to achieve a more efficient income for the InP. This consists of the main motivation of our greedy strategy in Algorithm \ref{al: revenuc2c}: Given a substrate cycle $C^s$ and a set of cycle VNRs $\{C^r_1,C^r_2,...,C^r_n\}$, for each $C^r_j$, we first estimate the ratio of revenue to resource consumption, \textit{i.e.}, $\frac{w_j}{\sum_{v^r \in C^r_j}CPU(v^r)+\sum_{e^r \in C^r_j}BW(e^r)}$. We then arrange them in the descending order of the estimated ratios, and sequentially embed them on the $C^s$ by procedure C2CE until no more cycle VNR can be embedded by simplex cycle embedding. However, one thing should be noted that the simplex cycle embedding has its own shortage, \textit{i.e.}, it limits the solution space. Therefore, if no cycle VNR can be embedded by simplex cycle embedding, we continue the embedding by running general algorithms. Via this combination, both merits of simplex cycle embedding and general algorithms can be conflated. 
\begin{algorithm}[!h]
    \caption{Procedure of Greedy Revenue (GR)}
    \label{al: revenuc2c}
		\SetKwInOut{Input}{Input}
		\SetKwInOut{Output}{Output}
		\SetKw{KwAnd}{and}
		\SetKw{KwSuch}{s.t.}
		\Input{$C^s$ and $\{C^r_1,C^r_2,...,C^r_n\}$.}
		\Output{The final revenue.}
        \textbf{set}  cycle VNRs in descending order by estimated ratios;\\
        \textbf{run} C2CE to sequentially embed cycle VNRs until can't;\\     
        \textbf{run} general algorithms;\\    
        \textbf{return} final revenue;
	\end{algorithm}
 
% Future works in cycle embedding could be to consider how to decompose the general substrate network $G^s$ into a set of substrate cycles and develop efficient algorithms to embed a set of cycle VNRs  $\{C^r_1,C^r_2,...,C^r_n\}$ on a set of substrate cycles $\{C^s_1,C^s_2,...,C^s_m\}$. Besides, for a cycle VNR, embedding it on a larger substrate cycle could consume more $BW$s while a smaller substrate cycle may not has feasible embedding ways. How to find a tradeoff is also very important. 

\section{Numerical Results}
\label{sec:nr}
In this section, we compare our proposed algorithms PE for path embedding and GR for cycle embedding to the existing general algorithms. Two general embedding algorithms from \cite{s1:b7} and \cite{s1:b4}, denoted by RW and BA respectively, are used as our benchmarks. Here, we use GRRW and GRBA to denote the procedures GR which invoke RW and BA respectively. We repeat each simulation 50 times under the same circumstance to ensure sufficient statistical accuracy, and a 95\% confidence interval is given to each numerical result. All the simulations have been run by MATLAB 2015a on a computer with 3.2 GHz Intel(R) Core(TM) i5-4690S CPU and 8 GBytes RAM. 

\subsection{Evaluation Environments}

\subsubsection{Path Embedding}
\hfill
\paragraph{Substrate Networks}
We use the GT-ITM tool \cite{s5:b1}, prevailing in the generation of practical network topologies, to randomly generate two substrate networks denoted by $G^s_1$, $G^s_2$ respectively. Both of these substrate networks have 100 SNs and 1000 SLs, corresponding to a medium-sized ISP. Besides, we also use a complete graph of 100 SNs as the substrate network denoted by $CG^s$. The $CPU$ and $BW$ of each SN and SL are set as 100 units.
\paragraph{Virtual Network Requests}
The length of each path VNR is randomly generated in the range of $[5,10]$. The $CPU$ and $BW$ of each VN and VL are randomly generated in the range of $[1,5]$ units. The number of path VNRs is set as $1000$ in each simulation. 

\subsubsection{Cycle Embedding}
\hfill
\paragraph{Substrate Networks} We set up three substrate cycles denoted by $C^s_{20}$, $C^s_{25}$, $C^s_{30}$ respectively, whose number of SNs are 20, 25 and 30 respectively,  corresponding to the sizes of existing substrate optical rings. The $CPU$ and $BW$ of each SN and SL are set as 100 units.

\paragraph{Virtual Network Requests} The number of VNs of each cycle VNR is randomly generated in the range of $[5,10]$. The $CPU$ and $BW$ of each VN and VL are randomly generated in the range of $[1,5]$ units. The number of cycle VNRs is set as $100$ in each simulation. 

\subsubsection{Performance Metrics}
\hfill

The comparisons are performed for both the AcR and Rev problems.
\begin{itemize}
\item The AcR problem: The revenue of each VNR is set to be one. Besides, we tweak the objective function of the AcR problem as $\frac{|S|}{n}$, where $S$ is the subset of embedded VNRs and $n$ is the number of total VNRs.

\item The Rev problem: The revenue of each VNR is proportional to its VN number in the range of [5,10].
\end{itemize}

\subsection{Simulation Results}

\subsubsection{Path Embedding}
\hfill

Figures \ref{fig:nacrp} and \ref{fig:nrRev} respectively demonstrate the numerical results of the AcR and Rev problems in path embedding. The average acceptance ratio of all the three substrate networks, as shown in Fig \ref{subfig:acrpe}, is $41.06\%$ for PE, compared to $30.70\%$ and $29.75\%$ for RW and BA respectively. The average revenue, as shown in Fig. \ref{subfig:revpe}, is $3052.12$ for PE, compared to $2308.05$ and $2251.01$ for RW and BA respectively. For time complexity as shown in Figs. \ref{subfig:acrtpe} and \ref{subfig:revtpe}, PE is of  an average run time of $1.32s$, obviously smaller than that of RW and BA, $20.44s$ and $8.28s$ repsectively.
%In the complete graph $CG^s$, the run time of PEM2O is bigger than PEO2O, because in complete graph DST can extract substrate paths faster than expanding Eulerian trail which needs to traverse all SLs. 
In summary, PE is much more efficient and effective than the two general algorithms RW and BA in path embedding.

\begin{figure}[!htbp]
\centering
\begin{subfigure}[Acceptance ratio]{
\label{subfig:acrpe}
\begin{tikzpicture}[scale=0.7]
      \begin{axis}[
      width  = 0.5*\textwidth,
      height = 8cm,
      major x tick style = transparent,
      ybar=5*\pgflinewidth,
      bar width=12pt,
     % ymajorgrids = true,
      symbolic x coords={\LARGE{$G^s_1$},\LARGE{$G^s_2$},\LARGE{$CG^s$}},
      xtick = data,
      scaled y ticks = false,
      enlarge x limits=0.25,
      ymin=0.2,
      ylabel= \LARGE{Acceptance Ratio},
      legend entries={\Large{PE},\Large{RW},\Large{BA}},
      legend style={at={(0.035,0.9)},anchor= west,legend cell align=left,legend columns=2},
      yticklabel=\pgfmathparse{100*\tick}\pgfmathprintnumber{\pgfmathresult}\,\%,
      yticklabel style={/pgf/number format/.cd,fixed,precision=2}
    ]

             \addplot [fill=red,area legend,
error bars/.cd, y dir=both, y explicit]
           coordinates {
          (\LARGE{$G^s_1$},0.3979) += (0,0.0158) -= (0,0.0158)
          (\LARGE{$G^s_2$},0.3978) += (0,0.016) -= (0,0.016)
%           ($G^s_3$,0.4102) += (0,0.014) -= (0,0.014)                                                       
%           ($G^s_4$,0.4034) += (0,0.015) -= (0,0.015)
          (\LARGE{$CG^s$},0.4362) += (0,0.017) -= (0,0.017)
         
          };

   \addplot [fill=blue,area legend,
error bars/.cd, y dir=both, y explicit]
           coordinates {
          (\LARGE{$G^s_1$},0.2995) += (0,0.01) -= (0,0.01)
          (\LARGE{$G^s_2$},0.31) += (0,0.008) -= (0,0.008)
%           ($G^s_3$,0.3164) += (0,0.0083) -= (0,0.0083)                            
%           ($G^s_4$,0.3018) += (0,0.01) -= (0,0.01)
          (\LARGE{$CG^s$},0.3116) += (0,0.0109) -= (0,0.0109)
         
          };

    \addplot [fill=green,area legend,
error bars/.cd, y dir=both, y explicit]
           coordinates {
          (\LARGE{$G^s_1$},0.2893) += (0,0.009) -= (0,0.009)
          (\LARGE{$G^s_2$},0.3014) += (0,0.008) -= (0,0.008)
%           ($G^s_3$,0.3060) += (0,0.0083) -= (0,0.0083)               ($G^s_4$,0.2926) += (0,0.012) -= (0,0.012)
          (\LARGE{$CG^s$},0.3018) += (0,0.011) -= (0,0.011)
         
          }; 
           
  \end{axis}
  \end{tikzpicture}}
\end{subfigure}
\begin{subfigure}[Time complexity]{
\label{subfig:acrtpe}
\begin{tikzpicture}[scale=0.7]
      \begin{axis}[
      width  = 0.5*\textwidth,
      height = 8cm,
      major x tick style = transparent,
      ybar=5*\pgflinewidth,
      bar width=12pt,
     % ymajorgrids = true,
      symbolic x coords={\LARGE{$G^s_1$},\LARGE{$G^s_2$},\LARGE{$CG^s$}},
      xtick = data,
      scaled y ticks = false,
      enlarge x limits=0.25,
      ymin=0,
      ylabel= \LARGE{Time($s$)},
      legend entries={\Large{PE},\Large{RW},\Large{BA}},
      legend style={at={(0.035,0.9)},anchor= west,legend cell align=left,legend columns=2},
      %yticklabel=\pgfmathparse{100*\tick}\pgfmathprintnumber{\pgfmathresult}\,\%,
      yticklabel style={/pgf/number format/.cd,fixed,precision=2}
  ]

             \addplot [fill=red,area legend,
error bars/.cd, y dir=both, y explicit]
           coordinates {
          (\LARGE{$G^s_1$},1.1891) += (0,0.3173) -= (0,0.3173)
          (\LARGE{$G^s_2$},1.0413) += (0,0.3091) -= (0,0.3091)
%           ($G^s_3$,1.2287) += (0,0.32) -= (0,0.032)                                                       
%           ($G^s_4$,0.9779) += (0,0.3) -= (0,0.3)
          (\LARGE{$CG^s$},1.7152) += (0,0.4) -= (0,0.4)
         
          };

   \addplot [fill=blue,area legend,
error bars/.cd, y dir=both, y explicit]
           coordinates {
          (\LARGE{$G^s_1$},15.4141) += (0,0.3937) -= (0,0.3937)
          (\LARGE{$G^s_2$}, 16.3184) += (0,0.4) -= (0,0.4)
%           ($G^s_3$,17.2523) += (0,0.39) -= (0,0.39)                            
%           ($G^s_4$,16.8973) += (0,0.398) -= (0,0.398)
          (\LARGE{$CG^s$},29.5953) += (0,0.5) -= (0,0.5)
         
          };

    \addplot [fill=green,area legend,
error bars/.cd, y dir=both, y explicit]
           coordinates {
          (\LARGE{$G^s_1$},7.3457) += (0,0.5034) -= (0,0.5034)
          (\LARGE{$G^s_2$},7.9145) += (0,0.4) -= (0,0.4)
%           ($G^s_3$,8.5761) += (0,0.52) -= (0,0.52)                   ($G^s_4$,8.1347) += (0,0.45) -= (0,0.45)
          (\LARGE{$CG^s$},9.5990) += (0,0.6) -= (0,0.6)
         
          }; 
           
  \end{axis}
  \end{tikzpicture}}
  \end{subfigure}
  \caption{Numerical results of the AcR problem in path embedding.}

\label{fig:nacrp}
\end{figure}
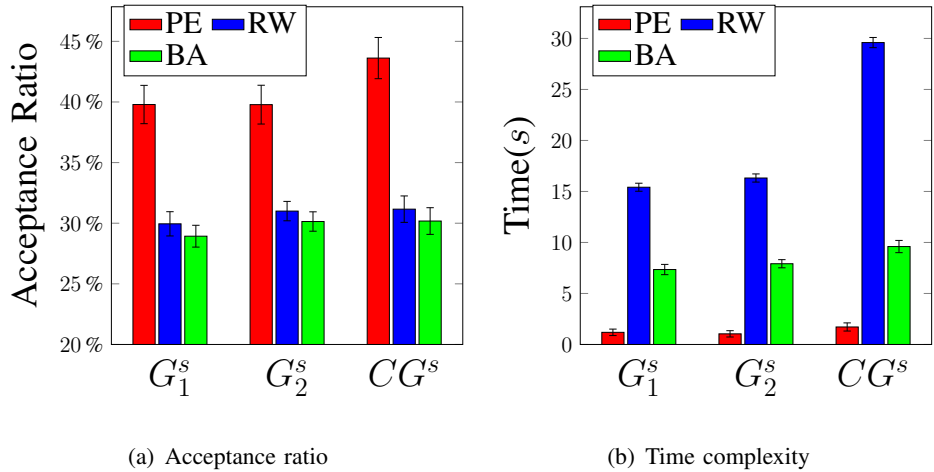

\begin{figure}[!htbp]
\centering
\begin{subfigure}[Revenue]{
\label{subfig:revpe}
\begin{tikzpicture}[scale=0.7]
      \begin{axis}[
      width  = 0.50*\textwidth,
      height = 8cm,
      major x tick style = transparent,
      ybar=5*\pgflinewidth,
      bar width=12pt,
     % ymajorgrids = true,
      symbolic x coords={\LARGE{$G^s_1$},\LARGE{$G^s_2$},\LARGE{$CG^s$}},
      xtick = data,
      scaled y ticks = false,
      enlarge x limits=0.25,
      ymin=2000,
      ylabel= \LARGE{Revenue},
      legend entries={\Large{PE},\Large{RW},\Large{BA}},
      legend style={at={(0.035,0.9)},anchor= west,legend cell align=left,legend columns=2},
      %yticklabel=\pgfmathparse{100*\tick}\pgfmathprintnumber{\pgfmathresult}\,\%,
      yticklabel style={/pgf/number format/.cd,fixed,precision=2}
  ]

             \addplot [fill=red,area legend,
error bars/.cd, y dir=both, y explicit]
           coordinates {
          (\LARGE{$G^s_1$},2967.36) += (0,100.25) -= (0,100.25)
          (\LARGE{$G^s_2$},2980.00) += (0,99) -= (0,0.99)
%           ($G^s_3$,3045) += (0,103.54) -= (0,103.54)                                                       
%           ($G^s_4$,2799.36) += (0,99.23) -= (0,99.23)
          (\LARGE{$CG^s$},3209) += (0,104.67) -= (0,104.67)
         
          };

   \addplot [fill=blue,area legend,
error bars/.cd, y dir=both, y explicit]
           coordinates {
          (\LARGE{$G^s_1$},2250.76) += (0,95.50) -= (0,95.50)
          (\LARGE{$G^s_2$}, 2281.40) += (0,97.23) -= (0,97.23)
%           ($G^s_3$,2208) += (0,94.2) -= (0,94.2)                            
%           ($G^s_4$,2167) += (0,96.71) -= (0,96.71)
          (\LARGE{$CG^s$},2391.99) += (0,100.1) -= (0,100.1)
         
          };

    \addplot [fill=green,area legend,
error bars/.cd, y dir=both, y explicit]
           coordinates {
          (\LARGE{$G^s_1$},2194.64) += (0,89.50) -= (0,89.50)
          (\LARGE{$G^s_2$},2258.60) += (0,92.87) -= (0,92.87)
%           ($G^s_3$,2197) += (0,95.6) -= (0,95.6)                                       ($G^s_4$,2133) += (0,93.51) -= (0,93.51)
          (\LARGE{$CG^s$},2299.78) += (0,101.32) -= (0,101.32)
         
          }; 
           
  \end{axis}
  \end{tikzpicture}}
  \end{subfigure}
\begin{subfigure}[Time complexity]{
\label{subfig:revtpe}
\begin{tikzpicture}[scale=0.7]
      \begin{axis}[
      width  = 0.50*\textwidth,
      height = 8cm,
      major x tick style = transparent,
      ybar=5*\pgflinewidth,
      bar width=12pt,
     % ymajorgrids = true,
      symbolic x coords={\LARGE{$G^s_1$},\LARGE{$G^s_2$},\LARGE{$CG^s$}},
      xtick = data,
      scaled y ticks = false,
      enlarge x limits=0.25,
      ymin=0,
      ylabel= \LARGE{Time($s$)},
      legend entries={\Large{PE},\Large{RW},\Large{BA}},
      legend style={at={(0.035,0.9)},anchor= west,legend cell align=left,legend columns=2},
      %yticklabel=\pgfmathparse{100*\tick}\pgfmathprintnumber{\pgfmathresult}\,\%,
      yticklabel style={/pgf/number format/.cd,fixed,precision=2}
  ]

             \addplot [fill=red,area legend,
error bars/.cd, y dir=both, y explicit]
           coordinates {
          (\LARGE{$G^s_1$},1.40) += (0,0.52) -= (0,0.52)
          (\LARGE{$G^s_2$},1.31) += (0,0.45) -= (0,0.45)
%           ($G^s_3$,1.51) += (0,0.4) -= (0,0.4)                                                       
%           ($G^s_4$,1.49) += (0,0.47) -= (0,0.47)
          (\LARGE{$CG^s$},2.03) += (0,0.51) -= (0,0.51)
         
          };

   \addplot [fill=blue,area legend,
error bars/.cd, y dir=both, y explicit]
           coordinates {
          (\LARGE{$G^s_1$},17.64) += (0,0.88) -= (0,0.88)
          (\LARGE{$G^s_2$}, 17.53) += (0,0.81) -= (0,0.81)
%           ($G^s_3$,17.30) += (0,0.83) -= (0,0.83)                            
%           ($G^s_4$,17.14) += (0,0.78) -= (0,0.78)
          (\LARGE{$CG^s$},31.92) += (0,0.89) -= (0,0.89)
         
          };

    \addplot [fill=green,area legend,
error bars/.cd, y dir=both, y explicit]
           coordinates {
          (\LARGE{$G^s_1$},8.87) += (0,0.56) -= (0,0.56)
          (\LARGE{$G^s_2$},8.89) += (0,0.47) -= (0,0.47)
%           ($G^s_3$,8.6) += (0,0.4) -= (0,0.4)                                           ($G^s_4$,8.51) += (0,0.55) -= (0,0.55)
          (\LARGE{$CG^s$},10.1) += (0,0.61) -= (0,0.61)
         
          }; 
           
  \end{axis}
  \end{tikzpicture}}
  \end{subfigure}
\caption{Numerical results of the Rev problem in path embedding.}
\label{fig:nrRev}
\end{figure}
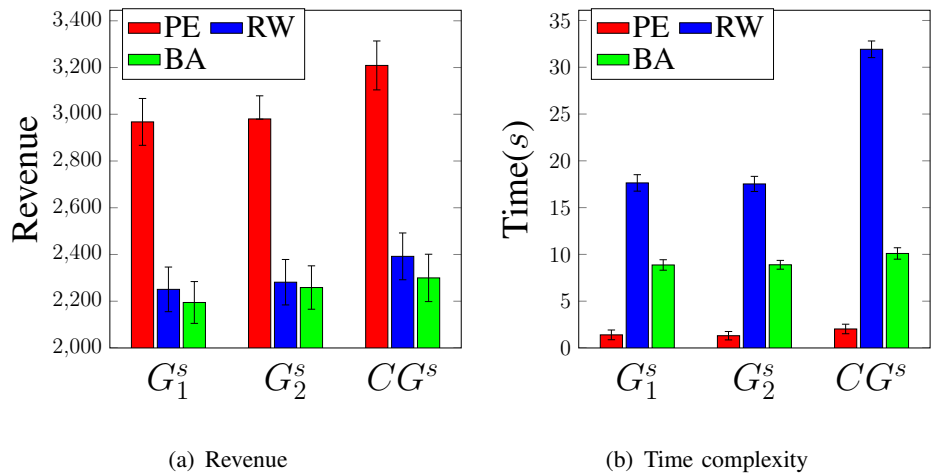

\subsubsection{Cycle Embedding}
\hfill

Figures \ref{fig:nacrpce} and \ref{fig:nrRevce} respectively showcase the numerical results of the AcR and Rev problems in cycle embedding. The average acceptance ratio, as shown in Fig. \ref{subfig:acrce}, of GRRW and GRBA are $31.13\%$ and $30.31\%$ respectively compared to $25.63\%$ and $24.84\%$ of RW and BA respectively. The average revenue, as shown in Fig. \ref{subfig:revce}, of GRRW and GRBA are $239.68$ and $234.96$ respectively while $189.80$ and $185.62$ of RW and BA respectively. From the aspect of final results of acceptance ratios and revenues, GRRW and GRBA take advantage over RW and BA. For the time complexity as shown in Figs. \ref{subfig:acrtce} and \ref{subfig:revtce}, while the run times of RW and BA are relatively stable and smaller than $2.5s$, that of GRRW and GRBA are quickly climbing because the time complexity of construction of the WDAG is fourth-order about the number of SNs. But, the corresponding acceptance ratios and revenues do not improve much with the increase of run times of GRRW and GRBA. Thus, in general substrate networks, in the future more work is needed to balance the size of decomposed substrate cycles and develop decomposition strategies so as to constitute cycle embedding algorithms as competitive as PE for path embedding.

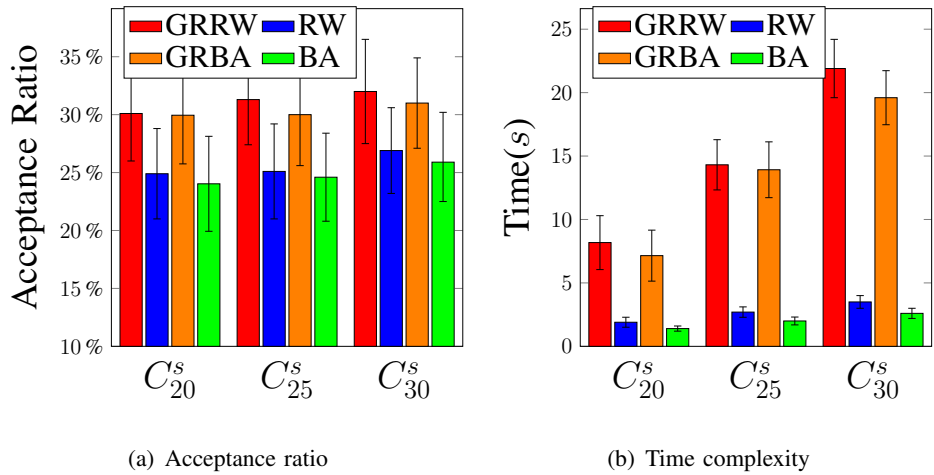
\begin{figure}[!htbp]
\centering
\begin{subfigure}[Acceptance ratio]{
\label{subfig:acrce}
\begin{tikzpicture}[scale=0.7]
      \begin{axis}[
      width  = 0.5*\textwidth,
      height = 8cm,
      major x tick style = transparent,
      ybar=5*\pgflinewidth,
      bar width=12pt,
     % ymajorgrids = true,
      symbolic x coords={\LARGE{$C^s_{20}$},\LARGE{$C^s_{25}$},\LARGE{$C^s_{30}$}},
      xtick = data,
      scaled y ticks = false,
      enlarge x limits=0.25,
      ymin=0.1,
      ylabel= \LARGE{Acceptance Ratio},
      legend entries={\Large{GRRW},\Large{RW},\Large{GRBA},\Large{BA}},
      legend style={at={(0.035,0.9)},anchor= west,legend cell align=left,legend columns=2},
      yticklabel=\pgfmathparse{100*\tick}\pgfmathprintnumber{\pgfmathresult}\,\%,
      yticklabel style={/pgf/number format/.cd,fixed,precision=2}
  ]

             \addplot [fill=red,area legend,
error bars/.cd, y dir=both, y explicit]
           coordinates {
          (\LARGE{$C^s_{20}$},0.301) += (0,0.041) -= (0,0.041)
          (\LARGE{$C^s_{25}$},0.313) += (0,0.039) -= (0,0.039)
          (\LARGE{$C^s_{30}$},0.32) += (0,0.045) -= (0,0.045)
         
          };

           \addplot [fill=blue,area legend,
error bars/.cd, y dir=both, y explicit]
           coordinates {
          (\LARGE{$C^s_{20}$},0.249) += (0,0.039) -= (0,0.039)
          (\LARGE{$C^s_{25}$},0.251) += (0,0.041) -= (0,0.041)
          (\LARGE{$C^s_{30}$},0.269) += (0,0.037) -= (0,0.037)
          
          };

   \addplot [fill=orange,area legend,
error bars/.cd, y dir=both, y explicit]
           coordinates {
          (\LARGE{$C^s_{20}$},0.2995) += (0,0.042) -= (0,0.042)
          (\LARGE{$C^s_{25}$},0.30) += (0,0.044) -= (0,0.044)
          (\LARGE{$C^s_{30}$},0.31) += (0,0.039) -= (0,0.039)
         
          };

    \addplot [fill=green,area legend,
error bars/.cd, y dir=both, y explicit]
           coordinates {
          (\LARGE{$C^s_{20}$},0.2403) += (0,0.041) -= (0,0.041)
          (\LARGE{$C^s_{25}$},0.246) += (0,0.038) -= (0,0.038)
          (\LARGE{$C^s_{30}$},0.259) += (0,0.043) -= (0,0.034)
         
          }; 
           
  \end{axis}
  \end{tikzpicture}}
\end{subfigure}
\begin{subfigure}[Time complexity]{
\label{subfig:acrtce}
\begin{tikzpicture}[scale=0.7]
      \begin{axis}[
      width  = 0.5*\textwidth,
      height = 8cm,
      major x tick style = transparent,
      ybar=5*\pgflinewidth,
      bar width=12pt,
     % ymajorgrids = true,
      symbolic x coords={\LARGE{$C^s_{20}$},\LARGE{$C^s_{25}$},\LARGE{$C^s_{30}$}},
      xtick = data,
      scaled y ticks = false,
      enlarge x limits=0.25,
      ymin=0,
      ylabel= \LARGE{Time($s$)},
      legend entries={\Large{GRRW},\Large{RW},\Large{GRBA},\Large{BA}},
      legend style={at={(0.035,0.9)},anchor= west,legend cell align=left,legend columns=2},
      %yticklabel=\pgfmathparse{100*\tick}\pgfmathprintnumber{\pgfmathresult}\,\%,
      yticklabel style={/pgf/number format/.cd,fixed,precision=2}
  ]

             \addplot [fill=red,area legend,
error bars/.cd, y dir=both, y explicit]
           coordinates {
          (\LARGE{$C^s_{20}$},8.18) += (0,2.12) -= (0,2.13)
          (\LARGE{$C^s_{25}$},14.31) += (0,1.98) -= (0,1.98)
          (\LARGE{$C^s_{30}$},21.9) += (0,2.3) -= (0,2.3)
         
          };

           \addplot [fill=blue,area legend,
error bars/.cd, y dir=both, y explicit]
           coordinates {
          (\LARGE{$C^s_{20}$}, 1.9) += (0,0.3945) -= (0,0.3945)
          (\LARGE{$C^s_{25}$},2.7) += (0,0.41) -= (0,0.41)
          (\LARGE{$C^s_{30}$},3.5) += (0,0.5) -= (0,0.5)
          
          };

   \addplot [fill=orange,area legend,
error bars/.cd, y dir=both, y explicit]
           coordinates {
          (\LARGE{$C^s_{20}$},7.15) += (0,2.01) -= (0,2.01)
          (\LARGE{$C^s_{25}$}, 13.92) += (0,2.2) -= (0,2.2)
          (\LARGE{$C^s_{30}$},19.6) += (0,2.13) -= (0,2.13)
         
          };

    \addplot [fill=green,area legend,
error bars/.cd, y dir=both, y explicit]
           coordinates {
          (\LARGE{$C^s_{20}$},1.4) += (0,0.2034) -= (0,0.2034)
          (\LARGE{$C^s_{25}$},2.0) += (0,0.31) -= (0,0.31)
          (\LARGE{$C^s_{30}$},2.6) += (0,0.4) -= (0,0.4)
         
          }; 
           
  \end{axis}
  \end{tikzpicture}}
  \end{subfigure}
  \caption{Numerical results of the AcR problem in cycle embedding.}

\label{fig:nacrpce}
\end{figure}

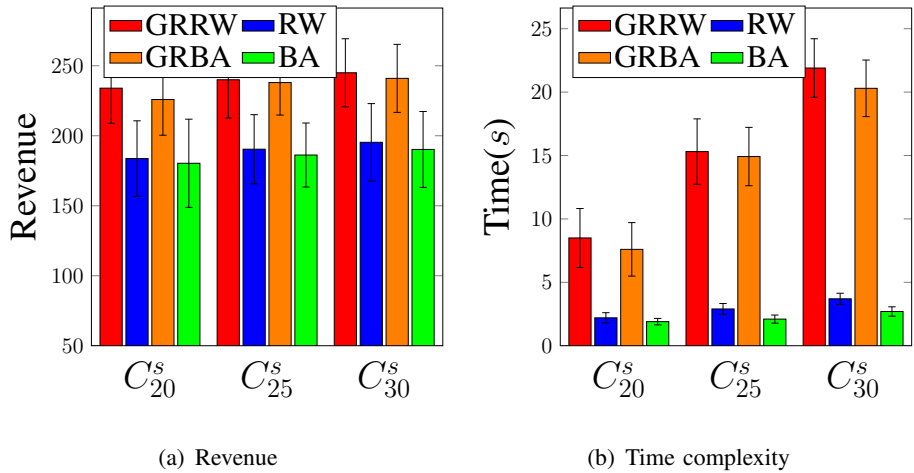
\begin{figure}[!htbp]
\centering
\begin{subfigure}[Revenue]{
\label{subfig:revce}
\begin{tikzpicture}[scale=0.7]
      \begin{axis}[
      width  = 0.50*\textwidth,
      height = 8cm,
      major x tick style = transparent,
      ybar=5*\pgflinewidth,
      bar width=12pt,
     % ymajorgrids = true,
      symbolic x coords={\LARGE{$C^s_{20}$},\LARGE{$C^s_{25}$},\LARGE{$C^s_{30}$}},
      xtick = data,
      scaled y ticks = false,
      enlarge x limits=0.25,
      ymin=50,
      ylabel= \LARGE{Revenue},
      legend entries={\Large{GRRW},\Large{RW},\Large{GRBA},\Large{BA}},
      legend style={at={(0.035,0.9)},anchor= west,legend cell align=left,legend columns=2},
      %yticklabel=\pgfmathparse{100*\tick}\pgfmathprintnumber{\pgfmathresult}\,\%,
      yticklabel style={/pgf/number format/.cd,fixed,precision=2}
  ]

             \addplot [fill=red,area legend,
error bars/.cd, y dir=both, y explicit]
           coordinates {
          (\LARGE{$C^s_{20}$},234) += (0,25.1) -= (0,25.1)
          (\LARGE{$C^s_{25}$},240) += (0,27.32) -= (0,27.32)
          (\LARGE{$C^s_{30}$},245) += (0,24.31) -= (0,24.31)
         
          };

           \addplot [fill=blue,area legend,
error bars/.cd, y dir=both, y explicit]
           coordinates {
          (\LARGE{$C^s_{20}$}, 183.7) += (0,26.97) -= (0,26.97)
          (\LARGE{$C^s_{25}$},190.4) += (0,24.67) -= (0,24.67)
          (\LARGE{$C^s_{30}$},195.32) += (0,27.65) -= (0,27.65)
          
          };

   \addplot [fill=orange,area legend,
error bars/.cd, y dir=both, y explicit]
           coordinates {
          (\LARGE{$C^s_{20}$},225.89) += (0,25.50) -= (0,25.50)
          (\LARGE{$C^s_{25}$}, 238) += (0,23.23) -= (0,23.23)
          (\LARGE{$C^s_{30}$},241) += (0,24.3) -= (0,24.3)
         
          };

    \addplot [fill=green,area legend,
error bars/.cd, y dir=both, y explicit]
           coordinates {
          (\LARGE{$C^s_{20}$},180.34) += (0,31.50) -= (0,31.50)
          (\LARGE{$C^s_{25}$},186.23) += (0,22.87) -= (0,22.87)
          (\LARGE{$C^s_{30}$},190.23) += (0,27.1) -= (0,27.1)
         
          }; 
           
  \end{axis}
  \end{tikzpicture}}
  \end{subfigure}
  \begin{subfigure}[Time complexity]{
\label{subfig:revtce}
\begin{tikzpicture}[scale=0.7]
      \begin{axis}[
      width  = 0.5*\textwidth,
      height = 8cm,
      major x tick style = transparent,
      ybar=5*\pgflinewidth,
      bar width=12pt,
     % ymajorgrids = true,
      symbolic x coords={\LARGE{$C^s_{20}$},\LARGE{$C^s_{25}$},\LARGE{$C^s_{30}$}},
      xtick = data,
      scaled y ticks = false,
      enlarge x limits=0.25,
      ymin=0,
      ylabel= \LARGE{Time($s$)},
      legend entries={\Large{GRRW},\Large{RW},\Large{GRBA},\Large{BA}},
      legend style={at={(0.035,0.9)},anchor= west,legend cell align=left,legend columns=2},
      %yticklabel=\pgfmathparse{100*\tick}\pgfmathprintnumber{\pgfmathresult}\,\%,
      yticklabel style={/pgf/number format/.cd,fixed,precision=2}
  ]

             \addplot [fill=red,area legend,
error bars/.cd, y dir=both, y explicit]
           coordinates {
          (\LARGE{$C^s_{20}$},8.5) += (0,2.32) -= (0,2.33)
          (\LARGE{$C^s_{25}$},15.31) += (0,2.58) -= (0,2.58)
          (\LARGE{$C^s_{30}$},21.9) += (0,2.31) -= (0,2.31)
         
          };

           \addplot [fill=blue,area legend,
error bars/.cd, y dir=both, y explicit]
           coordinates {
          (\LARGE{$C^s_{20}$}, 2.2) += (0,0.4045) -= (0,0.4045)
          (\LARGE{$C^s_{25}$},2.9) += (0,0.43) -= (0,0.43)
          (\LARGE{$C^s_{30}$},3.7) += (0,0.44) -= (0,0.44)
          
          };

   \addplot [fill=orange,area legend,
error bars/.cd, y dir=both, y explicit]
           coordinates {
          (\LARGE{$C^s_{20}$},7.6) += (0,2.11) -= (0,2.11)
          (\LARGE{$C^s_{25}$}, 14.92) += (0,2.3) -= (0,2.3)
          (\LARGE{$C^s_{30}$},20.3) += (0,2.23) -= (0,2.23)
         
          };

    \addplot [fill=green,area legend,
error bars/.cd, y dir=both, y explicit]
           coordinates {
          (\LARGE{$C^s_{20}$},1.9) += (0,0.2534) -= (0,0.2534)
          (\LARGE{$C^s_{25}$},2.1) += (0,0.32) -= (0,0.32)
          (\LARGE{$C^s_{30}$},2.7) += (0,0.37) -= (0,0.37)
         
          }; 
           
  \end{axis}
  \end{tikzpicture}}
  \end{subfigure}  
\caption{Numerical results of the Rev problem in cycle embedding.}
\label{fig:nrRevce}
\end{figure}

\section{Conclusions}
\label{sec:conclusion}
In this work, we systematically investigated the VNE problems in path and cycle topologies. For path embedding, we proved its $\mathcal{NP}$-hardness and inapproximability. Following the idea of expanding substrate networks into "paths", we further developed the MKP-MDKP-based algorithms for the path embedding, which turn out to be more efficient and effective than its counterparts. Regarding cycle embedding, we proposed an auxiliary graph WDAG, based on which we are able to characterize the one-to-one relation between a directed cycle in WDAG and a feasible simplex cycle embedding. Herein is devised a polynomial-time algorithm C2CE to obtain the optimal least-resource-consumption embedding solution. 
%As the future work, it is of great interest to further investigate path and cycle embeddings in the S-VNE. Their theoretical hardnesses are already obtained in this paper, since the G-VNE is a special case of the S-VNE. For practical algorithms, the current work for G-VNE, reflecting some topological-structure features of path and cycle embeddings, can give some inspirations to the S-VNE. 

As the future work, to realize the idea of decomposing general VNRs into special topologies to efficiently tackle the general VNE problem, there is still a need for dealing with communications between multiple InPs. 
\section*{Appendix}
In this appendix, we present the proof of Lemma \ref{lem1}.
To this end, we need to prove that one of the two problems can be reduced to the other in polynomial time. 

First, we prove $SSET \leq^P_T SG$: Given a graph $G(V,E)$, for each vertex pair $v,u \in V$, we construct a graph $G_{vu}$ by adding a new vertex $v^*$ and connecting  $v^*v$ and $v^*u$.  Afterwards, we obtain a set $\mathcal{G}=\{G_{vu}|\forall v,u \in V\}$ and $|\mathcal{G}|=\binom{|V|}{2}$. It is easy to see that $G$ contains a spanning subgraph which has an Eulerian trail iff there exists one $G_{vu} \in \mathcal{G}$ which is a Supereulerian graph.
%  \begin{figure}[!htb]
%  \centering
%   \begin{subfigure}[$G_{vu}$.]{
%          \label{subfig:f1}
%  		\includegraphics[height=0.12\textwidth]{}}
%      \end{subfigure}
%       \begin{subfigure}[$G^*$.]{
%          \label{subfig:f2}
%  		\includegraphics[height=0.12\textwidth]{}}
%      \end{subfigure}
%  \caption{Polynomial-time reduction.}
%  \label{ptr}
%  \end{figure}

Then for $SG \leq^P_T SSET$: Given a graph $G(V,E)$, arbitrarily selecting one vertex $v \in V$, we construct a graph $G^*$ by adding two new vertices $u_1$ and $u_2$ and connecting them to $v$. It is easy to see that $G$ is a Supereulerian graph iff $G^*$ contains a spanning subgraph which has an Eulerian trail.

\bibliographystyle{IEEEtran}
\bibliography{reference}

\end{document}